\def\calP{\mathcal{P}}
\def\calA{\mathcal{A}}
\def\Ed{\mathsf{E}\mathrm{d}}
\newtheorem{observation}{Observation}
\begin{document}

\title{Computing the Center of Uncertain Points on Cactus Graphs\thanks{A preliminary version of this paper appeared in \textit{Proceeding of the 34th International Workshop on Combinatorial Algorithms (IWOCA 2023)}.}}

\author{Ran Hu\inst{1}\and Divy H. Kanani\inst{2}\and Jingru Zhang\inst{2}}
\authorrunning{R. Hu et al.}

\institute{Rensselaer Polytechnic Institute, Troy, NY 12180, USA\\
\email{hur6@rpi.edu}\\
\and 
Cleveland State University, Cleveland, OH 44115, USA\\
\email{d.kanani@vikes.csuohio.edu, j.zhang40@csuohio.edu}  
}

\maketitle              
\begin{abstract}
In this paper, we consider the (weighted) one-center problem of uncertain points on a cactus graph. Given are a cactus graph $G$ and a set of $n$ uncertain points. Each uncertain point has $m$ possible locations on $G$ with probabilities and a non-negative weight. The (weighted) one-center problem aims to compute a point (the center) $x^*$ on $G$ to minimize the maximum (weighted) expected distance from $x^*$ to all uncertain points. No previous algorithm is known for this problem. In this paper, we propose an $O(|G| + mn\log mn)$-time algorithm for solving it. Since the input is $O(|G|+mn)$, our algorithm is almost optimal.

\keywords{Algorithms \and One-Center \and Cactus Graph\and Uncertain Points}
\end{abstract}

\section{Introduction}\label{introduction}
Problems on uncertain data have attracted an increasing amount of attention due to the observation that many real-world measurements are inherently accompanied with uncertainty. For example, the $k$-center model has been considered a lot on uncertain demands in facility locations~\cite{ref:AbamPr22,ref:QuanLi21,ref:BenkocziAn05,ref:WangCo19,ref:HuangSt17,ref:AverbakhFa05,ref:AverbakhMi97,ref:KeikhaCl21}. Due to the prevalence of tree-like graphs~\cite{ref:BurkardAL98,ref:ZmazekTh04,ref:BaiTh17,ref:Ben-MosheEf07,ref:GranotOn94,ref:BhattacharyaIm14} in facility locations, in this paper, we study the (weighted) one-center problem of uncertain points on a cactus-graph network. 

Let $G = (V,E)$ be a cactus graph where any two cycles do not share edges. Every edge $e$ on $G$ has a positive length. A point $x =(u,v,t)$ on $G$ is characterized by being located at a distance of $t$ on edge $(u,v)$ from vertex $u$. Given any two points $p$ and $q$ on $G$, the distance $d(p,q)$ between $p$ and $q$ is defined as the length of their shortest path on $G$.
   
Let $\calP$ be a set of $n$ uncertain points $P_1, P_2, \cdots, P_n$ on $G$. Each $P_i\in\calP$ has $m$ possible locations (points) $p_{i1}, p_{i2}, \cdots, p_{im}$ on $G$. Each location $p_{ij}$ is associated with a probability $f_{ij}\geq 0$ for $P_i$ appearing at $p_{ij}$. Additionally, each $P_i\in\calP$ has a weight $w_i\geq 0$. 

Assume that all given points (locations) on any edge $e\in G$ are given sorted so that when we visit $e$, all points on $e$ can be traversed in order. 

Consider any point $x$ on $G$. For any $P_i\in\calP$, the (weighted) expected distance $\Ed(P_i,x)$ from $P_i$ to $x$ is defined as $w_i\cdot \sum_{j=1}^{m}f_{ij}d(p_{ij}, x)$. The center of $G$ with respect to $\calP$ is defined to be a point $x^*$ on $G$ that minimizes the maximum expected distance $\max_{1\leq i\leq n}\Ed(P_i,x)$. The goal is to compute center $x^*$ on $G$. 

If $G$ is a tree network, then center $x^*$ can be computed in $O(mn)$ time by~\cite{ref:WangCo17}. To the best of our knowledge, however, no previous work exists for this problem on cacti. In this paper, we propose an $O(|G|+ mn\log mn)$-time algorithm for solving the problem where $|G|$ is the size of $G$. Note that our result matches the $O(|G|+n\log n)$ result~\cite{ref:Ben-MosheEf07} for the weighted {\em deterministic} case where each uncertain point has exactly one location. 

\subsection{Related Work}\label{relatedwork} 
The deterministic one-center problem on graphs have been studied a lot. On a tree, the (weighted) one-center problem has been solved in linear time by Megiddo~\cite{ref:MegiddoLi83}. On a cactus, an $O(|G|+n\log n)$ algorithm was given by Ben-Moshe~\cite{ref:Ben-MosheEf07}. Note that the unweighted cactus version can be solved in linear time~\cite{ref:LanAl99}. When $G$ is a general graph, the center can be found in $O(|E|\cdot |V|\log |V|)$ time~\cite{ref:KarivAn79}, provided that the distance-matrix of $G$ is given. See~\cite{ref:BaiTh17,ref:YenTh12,ref:ZmazekTh04} for variations of the general $k$-center problem.  

When it comes to uncertain points, a few of results for the one-center problem are available. When $G$ is a path network, the center of $\calP$ can be found in $O(mn)$ time~\cite{ref:WangAn16}. On tree graphs, the problem can be addressed in linear time~\cite{ref:WangCo19} as well. See \cite{ref:WangCo19,ref:HuangSt17,ref:KeikhaCl21} for the general $k$-center problem on uncertain points. 

\subsection{Our Approach}\label{ourapproach}
Lemma~\ref{lem:reduction} shows that the general one-center problem can be reduced in linear time to a {\em vertex-constrained} instance where all locations of $\calP$ are at vertices of $G$ and every vertex of $G$ holds at least one location of $\calP$. Our algorithm focuses on solving the vertex-constrained version. 

As shown in~\cite{ref:BurkardAL98}, a cactus graph is indeed a block graph and its skeleton is a tree where each node uniquely represents a cycle block, a graft block (i.e., a maximum connected tree subgraph), or a hinge (a vertex on a cycle of degree at least $3$) on $G$. Since center $x^*$ lies on an edge of a circle or a graft block on $G$, we seek for that block containing $x^*$ by performing a binary search on its tree representation $T$. Our $O(mn\log mn)$ algorithm requires to address the following problems. 

We first solve the one-center problem of uncertain points on a cycle. Since each $\Ed(P_i,x)$ is piece-wise linear but non-convex as $x$ moves along the cycle, our strategy is computing the local center of $\calP$ on every edge. Based on our useful observations, we can resolve this problem in $O(mn\log mn)$ time with the help of the dynamic convex-hull data structure~\cite{ref:AgarwalDy95,ref:BrodalDy02}.  

Two more problems are needed to be addressed during the search for the node containing $x^*$. First, given any hinge node $h$ on $T$, the problem requires to determine if center $x^*$ is on $h$, i.e., at hinge $G_h$ $h$ represents, and otherwise, which split subtree of $h$ on $T$ contains $x^*$, that is, which hanging subgraph of $G_h$ on $G$ contains $x^*$. In addition, a more general problem is the {\em center-detecting} problem: Given any block node $u$ on $T$, the goal is to determine whether $x^*$ is on $u$ (i.e., on block $G_u$ on $G$), and otherwise, which split tree of the $H$-subtree of $u$ on $T$ contains $x^*$, that is, which hanging subgraph of $G_u$ contains $x^*$. 

These two problems are more general problems on cacti than the tree version~\cite{ref:WangCo17} since each $\Ed(P_i, x)$ is no longer a convex function in $x$ on any path of $G$. We however observe that the median of any $P_i\in\calP$ always fall in the hanging subgraph of a block whose probability sum of $P_i$ is at least $0.5$. Based on this, with the assistance of other useful observations and lemmas, we can efficiently solve each above problem in $O(mn)$ time.  

\textbf{Outline.} In Section~\ref{sec:pre}, we introduce some notations and observations. In Section~\ref{sec:algforcycle}, we present our algorithm for the one-center problem on a cycle. In Section~\ref{sec:algorithm}, we discuss our algorithm for the problem on a cactus. In Section~\ref{sec:reduction}, we show how to linearly reduce any general case into a vertex-constrained case. 

\section{Preliminary}\label{sec:pre}

In the following, unless otherwise stated, we assume that our problem is the vertex-constrained case where every location of $\calP$ is at a vertex on $G$ and every vertex holds at least one location of $\calP$. Note that Lemma~\ref{lem:reduction} shows that any general case can be reduced in linear time into a vertex-constrained case. 

Some terminologies are borrowed from the literature~\cite{ref:BurkardAL98}. A $G$-vertex is a vertex on $G$ not included in any cycle, and a hinge is one on a cycle of degree greater than $2$. A graft is a maximum (connected) tree subgraph on $G$ where every leaf is either a hinge or a $G$-vertex, all hinges are at leaves, and no two hinges belong to the same cycle. A cactus graph is indeed a block graph consisting of graft blocks and cycle blocks so that the skeleton of $G$ is a tree $T$ where for each block on $G$, a node is joined by an edge to its hinges. See Fig.~\ref{fig:skeleton} for an example. 

\begin{figure}[ht]
\centering
\begin{minipage}{0.48\linewidth}
 \centering
 \includegraphics[width=0.95\textwidth]{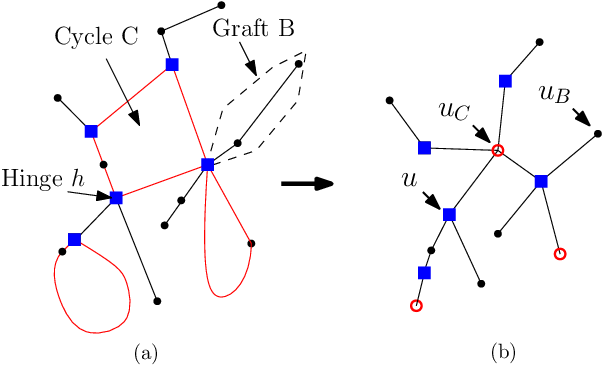}
 \caption{(a) Illustrating a cactus $G$ that consists of $3$ cycles, $5$ hinges (squares) and $6$ G-vertices (disks); (b) Illustrating $G$'s skeleton $T$ where circular and disk nodes represent cycles and grafts of $G$ (e.g., nodes $u$, $u_C$ and $u_B$ respectively representing hinge $h$, cycle $C$ and graft $B$ on $G$).}
 \label{fig:skeleton}
\end{minipage}
\hfill
\begin{minipage}{0.48\linewidth}
 \centering
 \includegraphics[width=0.95\textwidth]{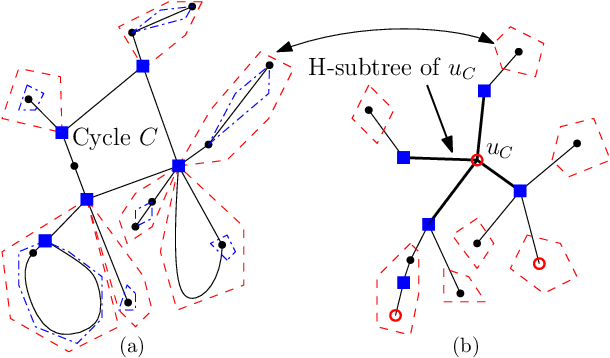}
 \caption{(a) Cycle $C$ on $G$ has $7$ split subgraphs (blue dash doted lines) and accordingly $7$ hanging subgraphs (red dashed lines); (b) on $T$, the H-subtree of node $u_c$ representing cycle $C$ has $7$ split subtrees each of which represents a distinct hanging subgraph of $C$ on $G$.}
\label{fig:hangingsub}
\end{minipage}
\end{figure}

In fact, $T$ represents a decomposition of $G$ so that we can traverse nodes on $T$ in a specific order to traverse $G$ blocks by blocks in the according order. Our algorithm thus works on $T$ to compute center $x^*$. Tree $T$ can be computed by a depth-first-search on $G$~\cite{ref:BurkardAL98,ref:Ben-MosheEf07} so that each node on $T$ is attached with a block or a hinge of $G$. We say that a node $u$ on $T$ is a block (resp., hinge) node if it represents a block (resp., hinge) on $G$. In our preprocessing work, we construct the skeleton $T$ with additional information maintained for nodes of $T$ to fasten the computation. 

Denote by $G_u$ the block (resp., hinge) on $G$ of any block (resp., hinge) node $u$ on $T$. More specifically, we calculate and maintain the cycle circumstance for every cycle node on $T$. For any hinge node $h$ on $T$, $h$ is attached with hinge $G_h$ on $G$ (i.e., $h$ represents $G_h$). For each adjacent node $u$ of $h$, vertex $G_h$ also exists on block $G_u$ but with only adjacent vertices of $G_u$ (that is, there is a copy of $G_h$ on $G_u$ but with adjacent vertices only on $G_u$). We associate each adjacent node $u$ in the adjacent list of $h$ with vertex $G_h$ (the copy of $G_h$) on $G_u$, and also maintain the link from vertex $G_h$ on $G_u$ to node $h$. 

Clearly, the size $|T|$ of $T$ is $O(mn)$ due to $|G| = O(mn)$. It is not difficult to see that all preprocessing work can be done in $O(mn)$ time. As a result, the following operations can be done in constant time.

\begin{enumerate}
    \item Given any vertex $v$ on $G$, finding the node on $T$ whose block $v$ is on;
    \item Given any hinge node $h$ on $T$, finding vertex $G_h$ on the block of every adjacent node of $h$ on $T$; 
    \item Given any block node $u$ on $T$, for any hinge on $G_u$, finding the hinge node on $T$ representing it.   
\end{enumerate}

Consider every hinge on the block of every block node on $T$ as an {\em open} vertex that does not contain any locations of $\calP$. To be convenient, for any point $x$ on $G$, we say that a node $u$ on $T$ contains $x$ or $x$ is on $u$ if $x$ is on $G_u$. Note that $x$ may be on multiple nodes if $x$ is at a hinge on $G$. We say that a subtree on $T$ contains $x$ if $x$ is on one of its nodes. 

Let $x$ be any point on $G$. Because $T$ defines a tree topology of blocks on $G$ so that vertices on $G$ can be traversed in some order. We consider computing $\Ed(P_i, x)$ for all $1\leq i\leq n$ by traversing $T$. We have the following lemma. Note that it defines an order of traversing $G$, which is used in other operations of our algorithm.   

\begin{lemma}\label{lem:computeEdforall}
Given any point $x$ on $G$, $\Ed(P_i, x)$ for all $1\leq i\leq n$ can be computed in $O(mn)$ time. 
\end{lemma}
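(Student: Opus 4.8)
The plan is to reduce the problem to a single-source shortest-path computation on $G$. In the vertex-constrained case every location $p_{ij}$ sits at a vertex of $G$, so once we know $d(v, x)$ for every vertex $v$ of $G$ we can obtain each $\Ed(P_i, x) = w_i\sum_{j=1}^m f_{ij}\, d(p_{ij}, x)$ by scanning the $mn$ locations and accumulating their weighted contributions into the corresponding running sum; this scan costs $O(mn)$ in total. Hence it suffices to compute the distances from $x$ to all vertices of $G$ in $O(|G|) = O(mn)$ time, and the whole difficulty lies in doing this single-source computation in linear time on a cactus.

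First I would exploit the block structure recorded by the skeleton $T$. Every hinge of $G$ is a cut vertex, and in $T$ two blocks are adjacent only through a shared hinge. Fix the block $G_{u_0}$ containing $x$ and consider the unique path of nodes in $T$ from $u_0$ to the node $u$ of any other block $G_u$. Let $h_u$ be the hinge through which $G_u$ is first reached along this path. Since $h_u$ is a cut vertex separating $G_u$ from $G_{u_0}$, every path from $x$ to a vertex $v\in G_u$ must pass through $h_u$, so
\[
d(x, v) = d(x, h_u) + d_{G_u}(h_u, v),
\]
where $d_{G_u}$ denotes the shortest-path distance \emph{inside} the block $G_u$. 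The key point justifying this split is that a shortest path between two vertices of one block never leaves that block: in a graft (a tree) the connecting path is unique, and in a cycle any detour into a hanging subgraph re-enters through the same hinge and only adds length, so the within-cycle shortest path is simply the shorter of the two arcs between the endpoints.

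Next I would turn the identity above into a traversal of $T$. Root $T$ at $u_0$ and process the block nodes in a top-down order (a BFS or DFS of $T$). For $G_{u_0}$ I compute $d(x, v)$ for all $v\in G_{u_0}$ directly from $x$; for a graft this is a DFS from $x$, and for a cycle it is the minimum of the two arc lengths, read off from the precomputed circumference. This already fixes $d(x, h)$ for every hinge $h$ of $G_{u_0}$, i.e.\ for every entry hinge of a child block. Inductively, when a block $G_u$ is reached through $h_u$, the value $d(x, h_u)$ is already known (it was computed while processing the parent block node, whose block also contains the hinge $h_u$), so I compute $d_{G_u}(h_u, v)$ for all $v\in G_u$ by the same graft-DFS or cycle-arc procedure and set $d(x, v) = d(x, h_u) + d_{G_u}(h_u, v)$. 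Each block $G_u$ is handled in $O(|G_u|)$ time, so the whole traversal runs in $O(\sum_u |G_u|) = O(|G|) = O(mn)$ time; the concluding accumulation of the $\Ed(P_i, x)$ values adds another $O(mn)$.

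I expect the main obstacle to be the careful justification of the within-block reduction rather than the bookkeeping. Concretely, I must argue (i) that the entry hinge $h_u$ is genuinely a cut vertex through which all $x$-to-$v$ paths are forced, so that the additive split $d(x,v)=d(x,h_u)+d_{G_u}(h_u,v)$ holds; and (ii) that $d_{G_u}$ coincides with the global distance, i.e.\ that no shortcut leaves $G_u$ --- which is immediate for grafts but for cycles requires observing that leaving through a hinge and returning is never beneficial, so the local distance is the smaller of the two arcs. Once these two facts are in place, the inductive correctness of $d(x, h_u)$ and the linear running time follow directly from the tree structure of $T$ and the fact that the blocks partition the edges of $G$.
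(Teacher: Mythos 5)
Your proposal is correct and follows essentially the same approach as the paper: root the skeleton $T$ at the block containing $x$, traverse $T$ top-down so that each block is entered through a hinge whose distance to $x$ is already known, compute within-block distances in linear time per block, and accumulate the weighted contributions of the $mn$ locations into running sums for the $\Ed(P_i,x)$. The only cosmetic difference is that the paper interleaves the accumulation into an array during the traversal rather than running a separate distance phase first, and your explicit justification of the cut-vertex additivity $d(x,v)=d(x,h_u)+d_{G_u}(h_u,v)$ makes rigorous what the paper leaves implicit.
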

\begin{proof}
We create an array $A[1\cdots n]$ to maintain all $\Ed(P_i,x)$ and initialize all as zero. Let $u_x$ be the block node on $T$ which contains $x$ and set it as the root of $T$. Clearly, $u_x$ as well as the corresponding point of $x$ on block $G_{u_x}$ can be obtained in $O(mn)$ time.

To compute $\Ed(P_i, x)$ for all $1\leq i\leq n$, it suffices to traverse $G$ starting from $x$ to compute the distance of every location to $x$. To do so, we instead traverse $T$ in the pre-order from $u_x$: During the traversal, block $G_{u_x}$ of $u_x$ is first traversed in the pre-order from $x$ to compute the distance of its every location to $x$. For every other block node $u$, the block is traversed in the pre-order starting from the hinge (open-vertex) whose corresponding hinge node on $T$ is the parent of $u$. So is every hinge node on $T$. 

More specifically, when we are visiting $G_{u_x}$, if $u_x$ is a cycle node then we traverse $G_{u_x}$ clockwise starting from $x$. During the traversal, for each vertex $v$, we first compute in constant time the distance $d(x,v)$; we next set $A[i] = A[i] + w_i\cdot f_{ij}\cdot d(x,v)$ for each location $p_{ij}$ at $v$ if $v$ is not a hinge; otherwise, we find in $O(1)$ time the hinge node $h$ on $T$ representing $v$ (i.e., $u_x$'s adjacent node), and set the distance $v$ on $h$ to $x$ as $d(x,v)$. 

In the case of $G_{u_x}$ being a graft, we perform the pre-order traversal from $x$ to update $A[1\cdots n]$ in the above way. Otherwise, $G_{u_x}$ is a hinge and so $d(G_{u_x},x) = 0$; we update $A[i]$ as the above for each location at $u_x$; we then set the distance $d(G_{u_x},x) = 0$ for $G_{u_x}$ on the block of every adjacent node of $u_x$. 

We continue our traversal on $T$ to visit $u_x$'s successors on $T$ in the pre-order to traverse their blocks. Suppose that we are visiting node $u$ on $T$. If $u$ is a hinge node, then its distance to $x$ can be known in constant time since hinge $G_u$ is an open vertex on the block of $u$'s parent node that has been visited. Consequently, we update $A[1\cdots n]$ as the above for every location $p_{ij}$ at $u$, and set the distance $d(G_u,x) = 0$ for $G_u$ on the block of every adjacent node of $u$.   

Otherwise, we traverse block $G_u$ from the hinge (open vertex) represented by $u$'s parent hinge node $h$ on $T$, which can be find in $O(1)$ time. As the distance of $G_{h}$ to $x$ has been known, the distance from every vertex on $G_u$ to $x$ can be obtained in $O(1)$ time. We thus update $A[1\cdots n]$ for locations on $G_u$ similarly. 

It follows that for any given point $x$ on $G$, values $\Ed(P_i, x)$ of all $1\leq i\leq n$ can be obtained in $O(mn)$ time by performing a pre-order traversal on $T$.\qed
\end{proof}

We say that a point $x$ on $G$ is an {\em articulation} point if $x$ is on a graft block; removing $x$ generates several connected disjoint subgraphs; each of them is called a {\em split} subgraph of $x$; the subgraph induced by $x$ and one of its split subgraphs is called a {\em hanging} subgraph of $x$. 

Similarly, any connected subgraph $G'$ of $G$ has several split subgraphs caused by removing $G'$, and each split subgraph with adjacent vertice(s) on $G'$ contributes a hanging subgraph. See Fig.~\ref{fig:hangingsub} (a) for an example. 

Consider any uncertain point $P_i\in\calP$. There exists a point $x^*_i$ on $G$ so that $\Ed(P_i,x)$ reaches its minimum at $x = x^*_i$; point $x^*_i$ is called the {\em median} of $P_i$ on $G$. For any subgraph $G'$ on $G$, we refer to value $\sum_{p_{ij}\in G'}f_{ij}$ as $P_i$'s {\em probability sum} of $G'$; we refer to value $w_i\cdot\sum_{p_{ij}\in G'}f_{ij}\cdot d(p_{ij},x)$ as $P_i$'s (weighted) {\em distance sum} of $G'$ to point $x$. 

Notice that we say that median $x^*_i$ of $P_i$ (resp., center $x^*$) is on a hanging subgraph of a subgraph $G'$ on $G$ iff $x^*_i$ (resp., $x^*$) is likely to be on that split subgraph of $G'$ it contains. We have the following lemma. 
 
\begin{lemma}\label{lem:medianatx}
Consider any articulation point $x$ on $G$ and any uncertain point $P_i\in\calP$.
    \begin{enumerate}
        \item If $x$ has a split subgraph whose probability sum of $P_i$ is greater than $0.5$, then its median $x_i^*$ is on the hanging subgraph including that split subgraph; 
        
        \item The point $x$ is $x_i^*$ if $P_i$'s probability sum of each split subgraph of $x$ is less than $0.5$;
        
        \item The point $x$ is $x_i^*$ if $x$ has a split subgraph with $P_i$'s probability sum equal to $0.5$.      
    \end{enumerate}
\end{lemma}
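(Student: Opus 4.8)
The plan is to prove all three parts at once from a single comparison inequality obtained by exploiting that $x$, being an articulation point, is a cut vertex of $G$. Fix $P_i$ and assume $w_i>0$ (if $w_i=0$ then $\Ed(P_i,\cdot)\equiv 0$ and every point, in particular $x$, is a median, so all three statements hold trivially). Since $\Ed(P_i,\cdot)$ is continuous and $G$ is compact, the median $x_i^*$ exists, and the whole argument reduces to comparing $\Ed(P_i,y)$ with $\Ed(P_i,x)$ for an arbitrary point $y\neq x$.

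First I would set up the distance decomposition. Let $y\neq x$ and let $S_k$ be the split subgraph of $x$ that contains $y$, with probability sum $\sigma_k=\sum_{p_{ij}\in S_k}f_{ij}$; write $\delta=d(x,y)>0$. Because $x$ separates $S_k$ from the rest of $G$, every path from $y$ to a location $p_{ij}\notin S_k$ must pass through $x$, so
\begin{equation*}
d(p_{ij},y)=d(p_{ij},x)+\delta \qquad \text{for all } p_{ij}\notin S_k ,
\end{equation*}
whereas the triangle inequality gives $d(p_{ij},y)\ge d(p_{ij},x)-\delta$ for $p_{ij}\in S_k$. Summing with weights $w_if_{ij}$, and using $\sum_j f_{ij}=1$ so that the locations outside $S_k$ carry total probability $1-\sigma_k$, yields the key inequality
\begin{equation*}
\Ed(P_i,y)-\Ed(P_i,x)\ \ge\ w_i\,\delta\,(1-2\sigma_k).
\end{equation*}

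From this inequality the three parts follow directly. For part~2, if every split subgraph has probability sum below $0.5$ then $1-2\sigma_k>0$ for the subgraph containing any $y\neq x$, so $\Ed(P_i,y)>\Ed(P_i,x)$; hence $x$ is the (unique) minimizer, i.e.\ $x=x_i^*$. For part~3, if some split subgraph has sum exactly $0.5$ then every split subgraph has $\sigma_k\le 0.5$, so $1-2\sigma_k\ge 0$ and $\Ed(P_i,y)\ge\Ed(P_i,x)$ for all $y$, giving $x=x_i^*$. For part~1, suppose a split subgraph $S_0$ has $\sigma_0>0.5$; then any other split subgraph $S_k$ has $\sigma_k\le 1-\sigma_0<0.5$, so for every $y$ lying outside the hanging subgraph $H_0$ induced by $x$ and $S_0$ we get $\Ed(P_i,y)>\Ed(P_i,x)$. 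Since $x\in H_0$ and the minimum is attained, $x_i^*$ must lie in $H_0$.

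I expect the main obstacle to be justifying the exact distance equality $d(p_{ij},y)=d(p_{ij},x)+\delta$ for $p_{ij}\notin S_k$: it is precisely here that the articulation (cut-vertex) property is essential, and one must argue that although $S_0$ or $S_k$ may contain whole cycles, the shortest $y$--$p_{ij}$ path is still forced through $x$ whenever $p_{ij}$ lies in a different split subgraph, so no shortcut around $x$ exists. The only other point needing care is the bookkeeping of probability mass sitting exactly at $x$ (when $x$ is a vertex holding locations of $P_i$): such mass counts as ``outside $S_k$'' and therefore contributes $+\delta$, which is consistent with the inequality and does not affect any of the three threshold comparisons.
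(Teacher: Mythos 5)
Your proof is correct and takes essentially the same route as the paper's: both decompose $\Ed(P_i,y)$ through the cut vertex $x$ (using $d(p_{ij},y)=d(p_{ij},x)+d(x,y)$ for locations separated from $y$ by $x$, and the triangle inequality for locations in $y$'s own split subgraph) and then compare the probability mass inside versus outside that subgraph, so your one-line inequality $\Ed(P_i,y)-\Ed(P_i,x)\ge w_i\,d(x,y)\,(1-2\sigma_k)$ is exactly the paper's displayed chain of inequalities in packaged form. The only differences are cosmetic but welcome: you spell out the $w_i=0$ and mass-at-$x$ edge cases, and you carry out the non-strict $\sigma_k\le 0.5$ comparison for claim~3, which the paper dismisses with ``verified similarly.''
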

\begin{proof}
Let $G_1(x), \cdots, G_s(x)$ be all split subgraphs of $x$ on $G$. For claim $1$, we assume that $P_i$'s probability sum of $G_1(x)$ is larger than $0.5$. We shall show that $x^*_i$ is not likely to be on $G_k(x)$ for any $2\leq k\leq s$. 

Consider any split subgraph $G_k(x)$ with $2\leq k\leq s$. Let $y$ be any point on $G_k(x)$. By the expected distance definition, we have the following. 

\vspace{-0.15in}
\begin{center}
\begin{equation*}
\begin{split}
\Ed(P_i,y) 
& = w_i\sum_{p_{ij}\notin G_k(x)}f_{ij}[d(p_{ij},x)+d(x, y)]  + w_i\sum_{p_{ij}\in G_k(x)}f_{ij}d(p_{ij},y)  \\
& = w_i\sum_{p_{ij}\notin G_k(x)}f_{ij}d(p_{ij},x) + w_i\sum_{p_{ij}\notin G_k(x)}f_{ij}d(x, y)  + w_i\sum_{p_{ij}\in G_k(x)} f_{ij}d(p_{ij},y) \\
& > w_i\sum_{p_{ij}\notin G_k(x)}f_{ij}d(p_{ij},x) + w_i\sum_{p_{ij}\in G_k(x)} f_{ij}[d(x, y) + d(p_{ij},y)] \\
& > w_i\sum_{p_{ij}\notin G_k(x)}f_{ij}d(p_{ij},x) + w_i\sum_{p_{ij}\in G_k(x)} f_{ij}d(p_{ij},x) \\
& = \Ed(P_i, x)
\end{split}
\end{equation*} 
\end{center}

It follows that none of $G_2(x), \cdots, G_s(x)$ contain $x^*_i$ and $x^*_i$ is thus on the hanging subgraph $G_1(x)\cup\{x\}$. Therefore, both claims $1$ and $2$ hold. 

For claim $3$, suppose that $P_i$'s probability sum of $G_1(x)$ is equal to $0.5$. To prove claim $3$, it is sufficient to prove $\Ed(P_i,x)\leq\Ed(P_i,y)$ for any point $y\in G_1(x)$. This can be verified similarly and we thus omit the details.\qed
\end{proof}

For any point $x\in G$, we say that $P_i$ is a dominant uncertain point of $x$ if $\Ed(P_i,x)\geq\Ed(P_j,x)$ for each $1\leq j\leq n$. Point $x$ may have multiple dominant uncertain points. Lemma~\ref{lem:medianatx} implies the following corollary. 

\begin{corollary}\label{cor:centerforart}
Consider any articulation point $x$ on $G$. 
\begin{enumerate}
    \item If $x$ has one dominant uncertain point whose median is at $x$, then center $x^*$ is at $x$; 
    \item If two dominant uncertain points have their medians on different hanging subgraphs of $x$, then $x^*$ is at $x$;
    \item Otherwise, $x^*$ is on the hanging subgraph that contains all their medians. 
\end{enumerate}
\end{corollary}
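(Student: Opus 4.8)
The plan is to reduce all three claims to a single monotonicity inequality that is essentially the computation already performed in the proof of Lemma~\ref{lem:medianatx}. Let $G_1(x),\dots,G_s(x)$ be the split subgraphs of the articulation point $x$, and recall that a dominant uncertain point $P_i$ satisfies $\Ed(P_i,x)=\max_{1\le j\le n}\Ed(P_j,x)$; write $D$ for this common maximum. Since the center $x^*$ minimizes $\max_j\Ed(P_j,\cdot)$, to locate $x^*$ it suffices to compare $\max_j\Ed(P_j,y)$ against $D$ as $y$ ranges over the various split subgraphs.

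First I would prove the key inequality: if $P_i$'s median $x_i^*$ lies in a split subgraph $G_a(x)$ (so $x_i^*\ne x$), then $\Ed(P_i,y)\ge\Ed(P_i,x)$ for every $y$ in any other split subgraph $G_c(x)$ with $c\ne a$. Because $x_i^*\ne x$, claims 2 and 3 of Lemma~\ref{lem:medianatx} force $P_i$'s probability sum of $G_a(x)$ to exceed $0.5$, so the probability sum $s$ of the disjoint subgraph $G_c(x)$ is strictly below $0.5$. Since $x$ is a cut point, for $y\in G_c(x)$ we have $d(p_{ij},y)=d(p_{ij},x)+d(x,y)$ when $p_{ij}\notin G_c(x)$ and $d(p_{ij},y)\ge d(p_{ij},x)-d(x,y)$ when $p_{ij}\in G_c(x)$ by the triangle inequality. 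Substituting into the definition of $\Ed(P_i,y)$ and using $\sum_j f_{ij}=1$ gives $\Ed(P_i,y)\ge\Ed(P_i,x)+w_i\,d(x,y)(1-2s)\ge\Ed(P_i,x)$, with strict inequality once $y\ne x$. This is exactly the estimate carried out in Lemma~\ref{lem:medianatx}, so I would cite that computation rather than repeat it.

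With the key inequality available, the three claims follow by short case analyses. For claim 1, if a dominant $P_i$ has $x_i^*=x$ then $\Ed(P_i,y)\ge\Ed(P_i,x)=D$ for all $y$ (as $x$ globally minimizes $\Ed(P_i,\cdot)$), so $\max_j\Ed(P_j,y)\ge D=\max_j\Ed(P_j,x)$ and $x=x^*$. For claim 2, take dominant $P_i,P_k$ with $x_i^*\in G_a(x)$ and $x_k^*\in G_b(x)$, $a\ne b$; any $y\ne x$ lies in some $G_c(x)$, and since $a\ne b$ at least one of $a,b$ differs from $c$, so the key inequality applied to the corresponding dominant point yields $\max_j\Ed(P_j,y)\ge D$, hence $x=x^*$. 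For claim 3, negating the hypotheses of claims 1 and 2 forces every dominant point to have its median in one common split subgraph $G_a(x)$; were $x^*$ in some $G_c(x)$ with $c\ne a$ and $x^*\ne x$, the strict form of the key inequality would give $\max_j\Ed(P_j,x^*)>D=\max_j\Ed(P_j,x)$, contradicting the optimality of $x^*$, so $x^*$ lies in the hanging subgraph $G_a(x)\cup\{x\}$ that contains all the medians.

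The main obstacle is establishing the key inequality cleanly, and in particular justifying the decomposition $d(p_{ij},y)=d(p_{ij},x)+d(x,y)$ for locations outside the split subgraph, which rests on $x$ being an articulation (cut) point. The delicate point is that $\Ed(P_i,\cdot)$ need not be convex inside a split subgraph because of cycles, so I must avoid any convexity argument and rely solely on the triangle-inequality lower bound together with the probability-sum bound $s<0.5$; everything else is routine bookkeeping over which split subgraph a candidate point occupies.
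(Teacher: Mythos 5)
Your proposal is correct and follows essentially the same route as the paper: the paper derives this corollary directly from Lemma~\ref{lem:medianatx}, whose proof contains exactly the triangle-inequality estimate $\Ed(P_i,y)\geq\Ed(P_i,x)+w_i\,d(x,y)(1-2s)$ with $s<0.5$ that you isolate as your key inequality. Your three case analyses are precisely the intended (and in the paper, implicit) fleshing-out of that monotonicity fact, including the correct use of claims 2 and 3 of the lemma to force the probability sum of the median's split subgraph above $0.5$.
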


Let $u$ be any block node on $T$; denote by $T^H_u$ the subtree on $T$ induced by $u$ and its adjacent (hinge) nodes; we refer to $T^H_u$ as the {\em H-subtree} of $u$ on $T$. Each hanging subgraph of block $G_u$ on $G$ is represented by a split subtree of $T^H_u$ on $T$. See Fig.~\ref{fig:hangingsub} (b) for an example. Lemma~\ref{lem:medianatx} also implies the following corollary. 

\begin{corollary}\label{cor:medianblock}
Consider any block node $u$ on $T$ and any uncertain point $P_i$ of $\calP$. 
\begin{enumerate}
    \item If the H-subtree $T^H_u$ of $u$ has a split subtree whose probability sum of $P_i$ is greater than $0.5$, then $x_i^*$ is on the split subtree of $T^H_u$; 
    
    \item If the probability sum of $P_i$ on each of $T^H_u$'s split subtree is less than $0.5$, then $x_i^*$ is on $u$ (i.e., block $G_u$ of $G$); 

    \item If $T^H_u$ has a split subtree whose probability sum of $P_i$ is equal to $0.5$, then $x_i^*$ is on that hinge node of $T^H_u$ that is adjacent to the split subtree. 
\end{enumerate}
\end{corollary}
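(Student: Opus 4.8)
The plan is to reduce Corollary~\ref{cor:medianblock} about the H-subtree $T^H_u$ directly to Lemma~\ref{lem:medianatx} about articulation points, exploiting the correspondence between split subtrees of $T^H_u$ on $T$ and hanging subgraphs of the block $G_u$ on $G$. The central observation I would establish first is a \emph{dictionary}: each split subtree of $T^H_u$ represents exactly one hanging subgraph of $G_u$, so $P_i$'s probability sum on a split subtree equals $P_i$'s probability sum $\sum_{p_{ij}\in G'}f_{ij}$ on the corresponding hanging subgraph $G'$ of $G_u$ (this is asserted in the text just before the corollary and in Fig.~\ref{fig:hangingsub}). With this dictionary in hand, the three claims about $T^H_u$ become statements about the split/hanging subgraphs of the \emph{block} $G_u$ rather than of a single point, so the remaining task is to lift Lemma~\ref{lem:medianatx}, which is stated for an articulation point $x$, to an entire block $G_u$.

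The key step is therefore a \emph{contraction/shortest-hinge} argument. For a fixed $P_i$, I would argue that it suffices to analyze how $\Ed(P_i,\cdot)$ behaves as we leave $G_u$ through its hinges. Concretely, let $G'$ be a hanging subgraph attached to $G_u$ at a hinge $h$ (the vertex $G_h$ on $G_u$), and let $G''=G\setminus G'$ be the complementary split subgraph together with the rest of $G_u$. Every location of $P_i$ lying in $G'$ must route through $h$ to reach any point of $G''$, and symmetrically every location outside $G'$ routes through $h$ to reach any point of $G'$. Thus $h$ plays exactly the role that the articulation point $x$ plays in Lemma~\ref{lem:medianatx}: the same separation-of-sums inequality used in that proof shows that moving from $h$ into $G'$ strictly decreases $\Ed(P_i,\cdot)$ precisely when $P_i$'s probability sum on $G'$ exceeds $0.5$, and strictly increases it when that sum is below $0.5$.

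Carrying this out for the three cases: for claim~1, if some split subtree of $T^H_u$ has $P_i$-probability sum $>0.5$, then the corresponding hanging subgraph $G'$ has probability sum $>0.5$, and applying Lemma~\ref{lem:medianatx}(1) at the hinge $h$ joining $G'$ to $G_u$ places $x_i^*$ inside $G'$, i.e.\ on that split subtree. For claim~2, if \emph{every} split subtree of $T^H_u$ has probability sum $<0.5$, then at each hinge of $G_u$ the split subgraph hanging off that hinge has probability sum $<0.5$, so by Lemma~\ref{lem:medianatx}(2) the median cannot escape $G_u$ through any hinge; hence $x_i^*$ lies on $G_u$. For claim~3, if a split subtree has probability sum exactly $0.5$, then the hinge $h$ attaching the corresponding hanging subgraph has, by Lemma~\ref{lem:medianatx}(3), the property that $h$ itself is a median $x_i^*$; since $h$ is exactly the hinge node of $T^H_u$ adjacent to that split subtree, the conclusion follows.

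The step I expect to be the main obstacle is the lifting from a single articulation point to the whole block $G_u$: Lemma~\ref{lem:medianatx} is phrased for one articulation point $x$, whereas a block may have several hinges, and its interior points need not be articulation points at all (a cycle block has none). To handle this cleanly I would quantify over all hinges of $G_u$ simultaneously and show that the hypotheses of the corollary's cases force the median to be confined to $G_u$ (claim~2) or to be pushed into a unique hanging subgraph (claims~1 and~3), being careful that for a cycle block the ``$<0.5$ on every split subtree'' hypothesis genuinely traps $x_i^*$ inside the cycle even though no interior point separates the graph. The cleanest way to dispatch this is to note that if $x_i^*$ were strictly inside some hanging subgraph $G'$, then the hinge separating $G'$ from $G_u$ would contradict whichever part of Lemma~\ref{lem:medianatx} applies, so the whole argument reduces to the per-hinge separation inequality already proved there.
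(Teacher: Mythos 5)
Your proposal is correct and matches the paper's intent: the paper states Corollary~\ref{cor:medianblock} as an immediate consequence of Lemma~\ref{lem:medianatx}, and your argument supplies exactly that reduction, using the correspondence between split subtrees of $T^H_u$ and hanging subgraphs of $G_u$ and applying the lemma at the hinges (which are articulation points, so the per-hinge separation inequality applies even when $G_u$ is a cycle). One small caution: for claim~2 you cannot invoke Lemma~\ref{lem:medianatx}(2) directly at a single hinge $v_k$, since the complementary split subgraph of $v_k$ through $G_u$ may have probability sum at least $0.5$; instead one runs the trichotomy over that complementary sum (parts~1, 2, or~3 of the lemma, as your closing paragraph in fact does), which in every case keeps $x_i^*$ out of the split subgraphs hanging off $v_k$.
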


Moreover, we have the following lemma.  

\begin{lemma}\label{lem:centerdetectart}
Given any articulation point $x$ on $G$, we can determine in $O(mn)$ time whether $x$ is $x^*$, and otherwise, which hanging subgraph of $x$ contains $x^*$. 
\end{lemma}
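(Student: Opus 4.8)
The plan is to combine Lemma~\ref{lem:computeEdforall} with Corollary~\ref{cor:centerforart}. First I would apply Lemma~\ref{lem:computeEdforall} to compute $\Ed(P_i,x)$ for every $1\le i\le n$ in $O(mn)$ time, scan the resulting array once to find $\max_i \Ed(P_i,x)$, and collect the set $D$ of dominant uncertain points of $x$ (those attaining this maximum). By Corollary~\ref{cor:centerforart}, deciding whether $x=x^*$ and, if not, which hanging subgraph of $x$ holds $x^*$, reduces entirely to locating, for each dominant $P_i\in D$, its median $x_i^*$ relative to $x$: either $x_i^*=x$, or $x_i^*$ lies on a specific hanging subgraph of $x$.

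By Lemma~\ref{lem:medianatx}, locating $x_i^*$ amounts to identifying the split subgraph of $x$ (if any) whose probability sum of $P_i$ is at least $0.5$. The key structural fact I would exploit is that, since $\sum_j f_{ij}=1$, at most one split subgraph of $x$ can carry probability sum exceeding $0.5$; and by claims~2 and~3 of Lemma~\ref{lem:medianatx}, $x_i^*=x$ precisely when no split subgraph has probability sum strictly above $0.5$. So for each $P_i$ it suffices to determine the unique heavy split subgraph, if one exists, together with whether its sum is $>0.5$ or exactly $0.5$.

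To compute these for all $P_i$ within $O(mn)$ total time, I would root $T$ at the node containing $x$ and process the split subgraphs of $x$ one at a time, following the traversal order of Lemma~\ref{lem:computeEdforall}. Each split subgraph corresponds to a distinct branch leaving $x$, so a single pre-order traversal from $x$ assigns every location to exactly one split subgraph according to its first step out of $x$. I would keep one reusable accumulator array $S[1\cdots n]$, initialized to zero: while traversing a split subgraph $G_k(x)$, for each location $p_{ij}$ I add $f_{ij}$ to $S[i]$ and push $i$ onto a list of touched indices; after finishing $G_k(x)$, I walk the touched list, record for each touched $i$ whether $S[i]>0.5$, $S[i]=0.5$, or $S[i]<0.5$ together with the branch index $k$, and then reset those $S[i]$ to zero. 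Since each location is handled a constant number of times (accumulate, test, reset), the whole computation costs $O(|G|)=O(mn)$, and it yields, for every $P_i$, the position of $x_i^*$ relative to $x$.

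Finally, I would restrict attention to $D$ and apply Corollary~\ref{cor:centerforart}: if some $P_i\in D$ has $x_i^*=x$, or if two points of $D$ have their medians on different hanging subgraphs of $x$, then $x^*=x$; otherwise all dominant medians lie on one common hanging subgraph $G_k(x)\cup\{x\}$, and $x^*$ lies on it, which I return. The main obstacle is the efficiency of the middle step: a naive per-branch, per-point tally would use $\Theta(\deg(x)\cdot n)$ time and space, so the crux is designing the single-array-with-touched-list traversal that amortizes the work to $O(mn)$ while correctly isolating each location into its split subgraph.
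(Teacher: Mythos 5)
Your proposal is correct and follows essentially the same route as the paper's proof: compute all $\Ed(P_i,x)$ via Lemma~\ref{lem:computeEdforall}, accumulate per-split-subgraph probability sums in a reusable array during a rooted traversal of $T$, and decide via Lemma~\ref{lem:medianatx} and Corollary~\ref{cor:centerforart} whether $x^*$ is at $x$ or on the unique hanging subgraph holding all dominant medians. The only differences are implementation details---you reset accumulators with a touched-index list where the paper re-traverses each split subgraph, and you tally sums for all points before restricting to dominants where the paper tests dominance ($A[i]=\delta$) during accumulation---neither of which changes the argument or the $O(mn)$ bound.
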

\begin{proof}
Apply Lemma~\ref{lem:computeEdforall} to compute the array $A[1\cdots n]$ with $A[i] = \Ed(P_i,x)$ in $O(mn)$ time, and then find the largest value $\delta$ of $A$ in $O(n)$ time. Create an array $F[1\cdots n]$ initialized as zero to store the probability sums of $x$'s dominant uncertain points on its each split subgraph, and another array $I[1\cdots n]$ initialized as $-1$ where $I[i]$ indicates $x$'s hanging subgraph containing $P_i$'s median $x_i^*$. 

We proceed with determining which hanging subgraph of $x$ contains medians of $x$'s dominant uncertain points by traversing $T$. 
Let $u_x$ be the node on $T$ containing $x$, which can be found in $O(mn)$ time. Notice that $u_x$ is either a hinge node or a graft node on $T$. Let $u_x$ be the root of $T$. 

On the one hand, $u_x$ is a graft node. Let $G^1_{u_x}, \cdots, G^s_{u_x}$ be the split subgraphs of $x$ on block $G_{u_x}$ of $u_x$. Hence, $x$ has $s$ split subgraphs $G_1(x), \cdots, G_s(x)$ on $G$ and $G^k_{u_x}\in G_k(x)$ for each $1\leq k\leq s$. Specifically, for each $1\leq k\leq s$, denote by $v^k_1, \cdots, v^k_t$ all hinges on $G^k_{u_x}$; since $G^1_{u_x}, \cdots, G^s_{u_x}$ are disjoint, the subgraph induced by $G_k(x)/G^k_{u_x}$ and $\{u^k_1, \cdots, u^k_t\}$ is represented by the union of subtrees on $T$ rooted at the corresponding hinge nodes $u^k_1, \cdots, u^k_t$ of $v^k_1, \cdots, v^k_t$. 

To prove the lemma, it suffices to compute the probability sum of dominant uncertain points of $x$ on each $G_k(x)$. For each $G_k(x)$, we maintain a list $L_k$ to store $u^k_1, \cdots, u^k_t$, which is empty initially. We then perform a traversal on $G_k(x)$ to compute the probability sum of uncertain points as follows. 

We first traverse $G^k_{u_x}$: For each non-hinge vertex $v$ on $G^k_{u_x}$, for each location $p_{ij}$ at $v$, we set $F[i] = F[i] +f_{ij}$; we then check whether $F[i]>0.5$ and $A[i] =\delta$; if both yes, then $P_i$ is a dominant uncertain point at $x$ whose median is on $G_k(x)\cup\{x\}$, and thereby we set $I[i]=k$; otherwise, $P_i$ is not a dominant uncertain point and hence we continue our traversal on $G^k_{u_x}$. When a hinge vertex $v$ is currently encountered, we find in $O(1)$ time its corresponding hinge node on $T$, add it to $L_k$, and then continue our traversal on $G^k_{u_x}$. 

Once we are done with traversing $G^k_{u_x}$, we continue to visit locations on the subgraph by $G_k(x)/G^k_{u_x}$ and $\{v^k_1, \cdots, v^k_t\}$. In order to do so, we traverse the subtree of $T$ rooted at each hinge node of $L_k$. The traversal is similar to that in Lemma~\ref{lem:computeEdforall} and so the details are omitted. 
    
Notice that after the above traversal on $G_k(x)$, we perform another traversal on $G_k(x)$ as the above, whereas during the traversal we reset $F[i]= 0$ for each location $p_{ij}$ on $G_k(x)$. Clearly, the traversal on all $G_k(x)$ can be carried out in $O(mn)$ time.  

To the end, we scan $I[1\cdots n]$ to determine which case of Corollary~\ref{cor:centerforart} $x$ falls into. More specifically, if there exist integers $i$ and $j$ with $1\leq i\neq j\leq n$ satisfying that $I[i], I[j] >0$ but $I[i]!=I[j]$, then two dominant uncertain points of $x$ have their medians on different hanging subgraphs of $x$ and so center $x^*$ must be at $x$; if $I[i] = -1$ for each $1\leq i\leq n$, $x^*$ is at $x$ as well; otherwise, only one hanging subgraph is found and it contains center $x^*$.

On the other hand, $u_x$ is a hinge node on $T$. Let $u_1, \cdots, u_s$ be all adjacent (block) nodes of $u_x$ on $T$. Denote by $T_{u_k}$ the subtree rooted at $u_k$ on $T$. Clearly, for each $1\leq k\leq s$, the subgraph represented by $T_{u_k}$ excluding vertex $G_{u_x}$ is exactly $G_k(x)$. Since $G_{u_x}$ is an open vertex on $G_{u_k}$, traversing each $G_k(x)$ amounts to traversing $T_{u_k}$, and we add only $u_k$ into $L_k$ for each $1\leq k\leq s$. It follows that we traverse $T_{u_k}$ to visit locations on $G_k(x)$ to compute $F[1\cdots n]$ and $I[1\cdots n]$ for each $1\leq k\leq s$; finally, we scan $I[1\cdots n]$ to determine as the above case where center $x^*$ locates. 

Therefore, the lemma holds.\qed
\end{proof}

Consider any hinge node $u$ on $T$. Lemma~\ref{lem:centerdetectart} implies the following corollary. 

\begin{corollary}\label{cor:centerdetectforhinge}
Given any hinge node $u$ on $T$, we can determine in $O(mn)$ time whether $x^*$ is on $u$ (i.e., at hinge $G_u$ on $G$), and otherwise, which split subtree of $u$ contains $x^*$. 
\end{corollary}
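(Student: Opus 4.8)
The plan is to reduce Corollary~\ref{cor:centerdetectforhinge} to the articulation-point machinery already established in Lemma~\ref{lem:centerdetectart}. The key observation is that a hinge node $u$ on $T$ represents a hinge vertex $G_u$ on $G$, and that vertex $G_u$ is precisely an articulation point of $G$: removing it disconnects $G$ into several split subgraphs, one per adjacent block node of $u$ on $T$. Thus the abstract question ``is $x^*$ on $u$, and if not which split subtree contains it'' translates directly into the concrete question ``is $x^*$ at the articulation point $G_u$, and if not which hanging subgraph contains it.''

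First I would set $x = G_u$, the hinge vertex on $G$ corresponding to the given hinge node $u$. Using operation~(2) from the preprocessing list (locating $G_u$ on the block of every adjacent node of $u$ in $O(1)$ time) and the fact that $u$ is a hinge node, I would confirm that $x$ is a legitimate articulation point of $G$. Then I would simply invoke Lemma~\ref{lem:centerdetectart} on the articulation point $x = G_u$. That lemma already handles the case where $u_x$ is a hinge node: its adjacent block nodes $u_1, \ldots, u_s$ each root a subtree $T_{u_k}$ whose represented subgraph, minus the open vertex $G_{u_x}$, is exactly the split subgraph $G_k(x)$. The lemma returns, in $O(mn)$ time, either the verdict that $x^*$ is at $x$ (equivalently, on $u$), or the unique hanging subgraph of $x$ containing $x^*$.

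The final step is to carry the answer back from the geometric picture to the tree picture. If Lemma~\ref{lem:centerdetectart} reports that $x^*$ is at $x = G_u$, then $x^*$ is on $u$ and we are done. Otherwise it reports a hanging subgraph $G_k(x)\cup\{x\}$; since this hanging subgraph is exactly the one whose split subgraph $G_k(x)$ is represented by the subtree $T_{u_k}$ rooted at the adjacent block node $u_k$, the split subtree of $u$ containing $x^*$ is precisely $T_{u_k}$. This correspondence between split subgraphs of $G_u$ and split subtrees of the hinge node $u$ is one-to-one and is already maintained by the adjacency structure of $T$, so the translation costs only $O(1)$ per candidate and $O(mn)$ overall. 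The total running time is dominated by the single call to Lemma~\ref{lem:centerdetectart}, giving $O(mn)$.

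The only point requiring care, and the natural main obstacle, is verifying that the correspondence between the split subtrees of a hinge node $u$ on $T$ and the hanging subgraphs of $G_u$ on $G$ is exact: one must check that no location or no adjacent block is double-counted or omitted, particularly because $G_u$ appears as an open vertex on each adjacent block $G_{u_k}$. Since Lemma~\ref{lem:centerdetectart} already accounts for this by excluding the open vertex $G_{u_x}$ from each $G_k(x)$, this obstacle is resolved by directly appealing to the hinge-node branch of that proof rather than reconstructing the traversal, so the corollary follows immediately.
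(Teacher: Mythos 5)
Your proposal is correct and matches the paper exactly: the paper gives no separate proof, stating only that Lemma~\ref{lem:centerdetectart} implies the corollary, which is precisely your reduction of the hinge node $u$ to the articulation point $x = G_u$ and an appeal to the hinge-node branch of that lemma's proof (where the split subgraphs $G_k(x)$ correspond one-to-one with the subtrees $T_{u_k}$ rooted at the adjacent block nodes). Your added care about the open-vertex bookkeeping and the $O(1)$ translation back to split subtrees of $T$ is a faithful elaboration of what the paper leaves implicit.
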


\section{The One-Center Problem on a Cycle}\label{sec:algforcycle}
In this section, we consider the one-center problem for the case of $G$ being a cycle. A general expected distance is considered: each $P_i\in\calP$ is associated with a constant $c_i$ so that the (weighted) distance of $P_i$ to $x$ is equal to their (weighted) expected distance plus $c_i$. With a little abuse of notations, we refer to it as the expected distance $\Ed(P_i,x)$ from $P_i$ to $x$. 

Our algorithm focuses on the vertex-constrained version where every location is at a vertex on $G$ and every vertex holds at least one location. Since $G$ is a cycle, it is easy to see that any general instance can be reduced in linear time to a vertex-constrained instance.  

Let $u_1, u_2, \cdots, u_M$ be the clockwise enumeration of all vertices on $G$, and $M\leq mn$. Let $l(G)$ be $G$'s circumstance. Every $u_i$ has a {\em semicircular} point $x_{i'}$ with $d(u_i,x_{i'}) = l(G)/2$ on $G$. Because sequence $x_{1'}, \cdots, x_{M'}$ is in the clockwise order. $x_{1'}, \cdots, x_{M'}$ can be computed in order in $O(mn)$ time by traversing $G$ clockwise. 

Join these semicircular points $x_{1'}, \cdots, x_{M'}$ to $G$ by merging them and $u_1, \cdots, u_M$ in clockwise order; simultaneously, reindex all vertices on $G$ clockwise.  Hence, a clockwise enumeration of all vertices on $G$ is generated in $O(mn)$ time. Clearly, the size $N$ of $G$ is now at most $2mn$. Given any vertex $u_i$ on $G$, 
there exists another vertex $u_{i^c}$ so that $d(u_i, u_{i^c}) = l(G)/2$. Importantly, $i^c = [{(i-1)}^c +1]\% N$ for $2\leq i\leq N$ and $1^c=(N^c +1)$. 

\begin{figure}[ht]
    \centering
    \includegraphics[width=0.45\textwidth]{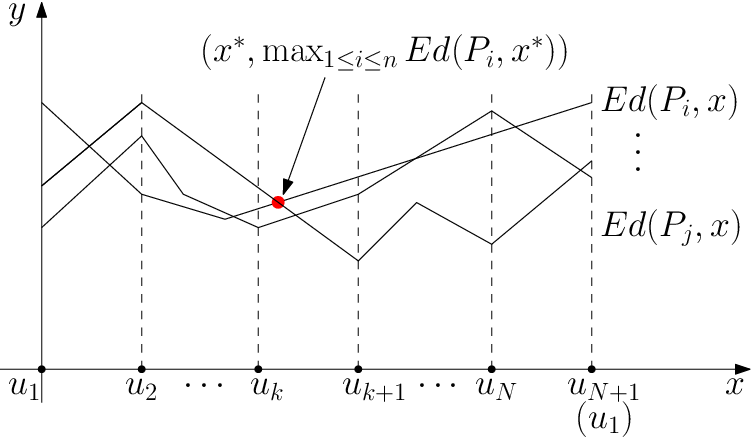}
    \caption{Consider $y=\Ed(P_i,x)$ in $x,y$-coordinate system by projecting cycle $G$ onto $x$-axis; $\Ed(P_i,x)$ of each $P_i\in\calP$ is linear in $x$ on any edge of $G$; center $x^*$ is decided by the projection on $x$-axis of the lowest point on the upper envelope of all $y=\Ed(P_i,x)$'s.}
    \label{fig:functions}
\end{figure}

Let $x$ be any point on $G$. Consider the expected distance $y=\Ed(P_i,x)$ in the $x,y$-coordinate system. We set $u_1$ at the origin and let vertices $u_1, u_2, \cdots, u_N$ $u_{N+1}, \cdots, u_{2N}$ be on $x$-axis in order so that $u_{N+i} = u_i$. Denote by $x_i$ the $x$-coordinate of $u_i$ on $x$-axis. For $1\leq i< j\leq N$, the clockwise distance between $u_i$ and $u_j$ on $G$ is exactly value $x_j-x_i$ and their counterclockwise distance is equal to $x_{i+N} - x_j$. 

As shall be analyzed below, each $\Ed(P_i,x)$ is linear in $x\in [x_s, x_{s+1}]$ for each $1\leq s\leq N$ but may be neither convex nor concave for $x\in [x_1, x_{N+1}]$, which is different to the deterministic case~\cite{ref:Ben-MosheEf07}. See Fig.~\ref{fig:functions}. Center $x^*$ is determined by the lowest point of the upper envelope of all $\Ed(P_i,x)$ for $x\in [x_1, x_{N+1}]$. Our strategy is computing the lowest point of the upper envelope on interval $[x_s, x_{s+1}]$, i.e., computing the local center $x^*_{s,s+1}$ of $\calP$ on $[x_s, x_{s+1}]$, for each $1\leq s\leq N$. Center $x^*$ is obviously decided by the lowest one among all of them. 

For each $1\leq s\leq N+1$, vertex $u_s$ has a {\em semicircular} point $x'$ on $x$-axis with $x_s-x'=l(G)/2$ and $x'$ must be at a vertex on $x$-axis in that $u_s$ on $G$ has its semicircular point at vertex $u_{s^c}$. We still let $u_{s^c}$ be $u_s$'s semicircular point on $x$-axis. Clearly, for each $1\leq s\leq N$, $(s+1)^c = s^c+1$, and the semicircular point of any point in $[x_s, x_{s+1}]$ lies in $[x_{s^c}, x_{(s+1)^c]}]$. Indices $1^c, 2^c, \cdots, (N+1)^c$ can be easily determined in order in $O(mn)$ time and so we omit the details. 

Consider any uncertain point $P_i$ of $\calP$. Because for any $1\leq s\leq N$, interval $[x_{s+1}, x_{s+N}]$ contains all locations of $\calP$ uniquely. We denote by $x_{ij}$ the $x$-coordinate of location $p_{ij}$ in $[x_{s+1}, x_{s+N}]$; denote by $F_i(x_s, x_{s^c})$ the probability sum of $P_i$'s locations in $[x_s, x_{s^c}]$; let $D_i(x_{s+1}, x_{s^c})$ be value $w_i\cdot\sum_{p_{ij}\in [x_{s+1}, x_{s^c}]} f_{ij}x_{ij}$ and $D^c_i(x_{s^c+1}, x_{s+N})$ be value $w_i\cdot\sum_{p_{ij}\in [x_{s^c+1}, x_{s+N}]} f_{ij}(l(G)-x_{ij})$. Due to $F_i(x_{s+1}, x_{s^c})+ F_i(x_{s^c+1}, x_{s+N}) =1$, we have that $\Ed(P_i, x)$ for $x\in [x_s, x_{s+1}]$ can be formulated as follows.  

\begin{center}
\begin{equation*}
    \begin{split}
        \Ed(P_i, x) & = c_i+ w_i\sum_{p_{ij}\in [x_{s+1}, x_{s^c}]} f_{ij}\cdot (x_{ij}-x) + w_i\sum_{p_{ij}\in [x_{s^c+1}, x_{s+N}]} f_{ij}\cdot [l(G)-(x_{ij}-x)] \\       
        & =c_i+ w_i(\sum_{p_{ij}\in [x_{s^c+1}, x_{s+N}]} f_{ij} - \sum_{p_{ij}\in [x_{s+1}, x_{s^c}]} f_{ij})\cdot x + w_i\sum_{p_{ij}\in [x_{s+1}, x_{s^c}]} f_{ij}x_{ij} \\ 
        & \ \ \ \ - w_i\sum_{p_{ij}\in [x_{s^c+1}, x_{s+N}]} f_{ij}(l(G)-x_{ij}) \\
        & =w_i [1-2F_i(x_{s+1}, x_{s^c})]\cdot x + c_i+  D_i(x_{s+1}, x_{s^c}) - D^c_i(x_{s^c+1}, x_{s+N})
        \end{split}
\end{equation*}
\end{center}

It turns out that each $\Ed(P_i,x)$ is linear in $x\in [x_s, x_{s+1}]$ for each $1\leq s\leq N$, and it turns at $x=x_s$ if $P_i$ has locations at points $u_s$, $u_{s^c}$, or $u_{s+N}$. Note that $\Ed(P_i,x)$ may be neither convex nor concave. Hence, each $\Ed(P_i,x)$ is a piece-wise linear function of complexity at most $m$ for $x\in [x_1, x_{N+1}]$. It follows that the local center $x^*_{s,s+1}$ of $\calP$ on $[x_s, x_{s+1}]$ is decided by the $x$-coordinate of the lowest point of the upper envelope on $[x_s, x_{s+1}]$ of functions $\Ed(P_i,x)$'s for all $1\leq i\leq n$. 

Consider the problem of computing the lowest points on the upper envelope of all $\Ed(P_i, x)$'s on interval $[x_s, x_{s+1}]$ for all $1\leq s\leq N$ from left to right. Let $L$ be the set of lines by extending all line segments on $\Ed(P_i,x)$ for all $1\leq i\leq n$, and $|L|\leq mn$. Since the upper envelope of lines is geometric dual to the convex (lower) hull of points, the dynamic convex-hull maintenance data structure of Brodal and Jacob~\cite{ref:BrodalDy02} can be applied to $L$ so that with $O(|L|\log |L|)$-time preprocessing and $O(|L|)$-space, our problem can be solved as follows. 

Suppose that we are about to process interval $[x_s, x_{s+1}]$. The dynamic convex-hull maintenance data structure $\Phi$ currently maintains the information of only $n$ lines caused by extending the line segment of each $\Ed(P_i,x)$'s on $[x_{s-1},x_{s}]$. Let $\calP_s$ be the subset of uncertain points of $\calP$ whose expected distance functions turn at $x=x_s$. For each $P_i\in\calP_s$, we delete from $\Phi$ function $\Ed(P_i, x)$ for $x\in [x_{s-1}, x_{s}]$ and then insert the line function of $\Ed(P_i, x)$ on $[x_s, x_{s+1}]$ into $\Phi$. After these $2|\calP_s|$ updates, we compute the local center $x^*_{s,s+1}$ of $\calP$ on $[x_{s}, x_{s+1}]$ as follows. 

Perform an extreme-point query on $\Phi$ in the vertical direction to compute the lowest point of the upper envelope of the $n$ lines. If the obtained point falls in $[x_s, x_{s+1}]$, $x^*_{s, s+1}$ is of same $x$-coordinate as this point and its $y$-coordinate is the objective value at $x^*_{s, s+1}$; otherwise, it is to left of line $x=x_s$ (resp., to right of $x=x_{s+1}$) and thereby $x^*_{s, s+1}$ is of $x$-coordinate equal to $x_s$ (resp., $x_{s+1}$). Accordingly, we then compute the objective value at $x=x_s$ (resp., $x=x_{s+1}$) by performing another extreme-point query in direction $y=-x_s\cdot x$ (resp., $y=-x_{s+1}\cdot x$). 

Note that $\calP_1 =\calP$ for interval $[x_1, x_2]$ and $\sum_{s=1}^{N}|P_s| = |L|\leq mn$. Since updates and queries each takes $O(\log |L|)$ amortized time, for each interval $[x_s, x_{s+1}]$, we spend totally $O(|\calP_s|\cdot\log |L|)$ amortized time on computing $x^*_{s,s+1}$. It implies that the time complexity for all updates and queries on $\Phi$ is $O(mn\log mn)$ time. Therefore, the total running time of computing the local centers of $\calP$ on $[x_s, x_{s+1}]$ for all $1\leq s\leq N$ is $O(mn\log mn)$ plus the time on determining functions $\Ed(P_i, x)$ of each $P_i\in\calP_s$ on $[x_s, x_{s+1}]$ for all $1\leq s\leq N$. 

We now present how to determine $\Ed(P_i,x)$ of each $P_i\in\calP_s$ in $x\in [x_s, x_{s+1}]$ for all $1\leq s\leq N$. Recall that $\Ed(P_i,x) = w_i\cdot[1-2F_i(x_{s+1}, x_{s^c})]\cdot x +D_i(x_{s+1}, x_{s^c}) - D^c_i(x_{s^c+1}, x_{s+N})+c_i$ for $x \in [x_s, x_{s+1}]$. It suffices to compute the three coefficients $F_i(x_{s+1}, x_{s^c})$, $D_i(x_{s+1}, x_{s^c})$ and $D^c_i(x_{s^c+1}, x_{s+N})$ for each $1\leq i\leq n$ and $1\leq s\leq N$. 

We create auxiliary arrays $X[1\cdots n]$, $Y[1\cdots n]$ and $Z[1\cdots n]$ to maintain the three coefficients of $\Ed(P_i, x)$ for $x$ in the current interval $[x_s, x_{s+1}]$, respectively; another array $I[1\cdots n]$ is also created so that $I[i] =1$ indicates that $P_i\in\calP_s$ for the current interval $[x_s, x_{s+1}]$; we associate with $u_s$ for each $1\leq s\leq N$ an empty list $\calA_s$ that will store the coefficients of $\Ed(P_i, x)$ on $[x_s, x_{s+1}]$ for each $P_i\in\calP_s$. Initially, $X[1\cdots n]$, $Y[1\cdots n]$, and $Z[1\cdots n]$ are all set as zero, and $I[1\cdots n]$ is set as one due to $\calP_1=\calP$.

For interval $[x_1, x_2]$, we compute $F_i(x_{2}, x_{1^c})$, $D_i(x_{2}, x_{1^c})$ and $D^c_i(x_{1^c+1}, x_{N+1})$ for each $1\leq i\leq n$: for every location $p_{ij}$ in $[x_{2}, x_{1^c}]$, we set $X[i] = X[i] + f_{ij}$ and $Y[i] = Y[i] + w_i\cdot f_{ij}\cdot x_{ij}$; for every location $p_{ij}$ in $[x_{1^c+1}, x_{N+1}]$, we set $Z[i] = Z[i]+ w_i\cdot (l(G)- x_{ij})$. Since $x_{1^c}$ is known in $O(1)$ time, it is easy to see that for all $P_i\in\calP_1$, functions $\Ed(P_i,x)$ for $x\in [x_1, x_2]$ can be determined in $O(mn)$ time. Next, we store in list $\calA_1$ the coefficients of $\Ed(P_i,x)$ of all $P_i\in\calP_1$ on $[x_1, x_2]$: for each $I[i]=1$, we add tuples $(i, w_i\cdot X[i], c_i+ Y[i]-Z[i])$ to $\calA_1$ and then set $I[i]=0$. Clearly, list $\calA_1$ for $u_1$ can be computed in $O(mn)$ time. 

Suppose we are about to determine the line function of $\Ed(P_i,x)$ on $[x_s, x_{s+1}]$, i.e., coefficients $F_i(x_{s+1}, x_{s^c})$, $D_i(x_{s+1}, x_{s^c})$ and $D^c_i(x_{s^c+1}, x_{s+N})$, for each $P_i\in\calP_s$. Note that if $P_i$ has no locations at $u_s$, $u_{s^c}$ and $u_{s+N}$, then $P_i$ is not in $\calP_s$; otherwise, $\Ed(P_i, x)$ turns at $x=x_s$ and we need to determine $\Ed(P_i, x)$ for $x\in [x_s,x_{s+1}]$. 

Recall that for $x\in [x_{s-1}, x_{s}]$, $\Ed(P_i,x) = c_i+ w_i\cdot [1-2F_i(x_{s}, x_{(s-1)^c})]\cdot x + D_i(x_{s}, x_{(s-1)^c}) - D^c_i(x_{(s-1)^c+1}, x_{s-1+N})$. On account of $s^c = (s-1)^c+1$, for $x\in [x_s, x_{s+1}] $, we have $F_i(x_{s+1}, x_{s^c}) = F_i(x_{s}, x_{(s-1)^c}) - F_i(x_s,x_s) + F_i(x_{s^c},x_{s^c})$, $D_i(x_{s+1}, x_{s^c}) =D_i(x_{s}, x_{(s-1)^c}) - D_i(x_{s}, x_{s}) + D_i(x_{s^c}, x_{s^c})$, and $D^c_i(x_{s^c+1}, x_{s+N}) = D^c_i(x_{(s-1)^c+1}, x_{s-1+N}) -D^c_i(x_{s^c}, x_{s^c}) + D^c_i(x_{s+N}, x_{s+N})$. Additionally, for each $1\leq i\leq n$, $\Ed(P_i,x)$ on $[x_{s-1}, x_{s}]$ is known, and its three coefficients are respectively in entries $X[i]$, $Y[i]$, and $Z[i]$. We can determine $\Ed(P_i,x)$ of each $P_i\in\calP_s$ on $[x_s,x_{s+1}]$ as follows. 

For each location $p_{ij}$ at $u_s$, we set $X[i] = X[i] - f_{ij}$, $Y[i] = Y[i] - w_i f_{ij} x_{ij}$ and $I[i] = 1$; for each location $p_{ij}$ at $u_{s^c}$, we set $X[i] = X[i] + f_{ij}$, $Y[i] = Y[i] + w_i f_{ij} x_{ij}$, $Z[i] = Z[i] - w_i f_{ij} (l(G) - x_{ij})$ and $I[i] = 1$; further, for each location $p_{ij}$ at $u_{s+N}$, we set $Z[i] = Z[i] + w_i f_{ij}(l(G) - x_{ij})$ and $I[i] = 1$. Subsequently, we revisit locations at $u_s$, $u_{s^c}$ and $u_{s+N}$: for each location $p_{ij}$, if $I[i] = 1$ then we add a tuple $(i, w_i\cdot X[i], c_i+ Y[i]-Z[i])$ to $\calA_s$ and set $I[i] =0$, and otherwise, we continue our visit. 

For each $2\leq s\leq N$, clearly, functions $\Ed(P_i,x)$ on $[x_s, x_{s+1}]$ of all $P_i\in\calP_s$ can be determined in the time linear to the number of locations at the three vertices $u_s$, $u_{s^c}$ and $u_{s+N}$. It follows that the time complexity for determining $\Ed(P_i,x)$ of each $P_i\in\calP_s$ for all $1\leq s\leq N$, i.e., computing the set $L$, is $O(mn)$; that is, the time complexity for determining $\Ed(P_i,x)$ for each $P_i\in\calP$ on $[x_1,x_{N+1}]$ is $O(mn)$. 

Combining all above efforts, we have the following theorem. 

\begin{theorem}\label{the:cycle}
The one-center problem of $\calP$ on a cycle can be solved in $O(|G|+mn\log mn)$ time. 
\end{theorem}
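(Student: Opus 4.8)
The plan is to assemble the analysis developed above into a single running-time bound, proceeding in three stages: coordinate setup, function construction, and envelope optimization.

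First, I would dispose of the $|G|$ term. Reducing the general instance to the vertex-constrained one takes $O(|G|)$ time; augmenting $G$ with the semicircular points, merging them into the clockwise order, reindexing all vertices, and computing the antipodal indices $1^c,\dots,(N+1)^c$ all take $O(mn)$ time. After this we have $N\le 2mn$ vertices and the coordinate framework in which each $\Ed(P_i,x)$ is linear on every interval $[x_s,x_{s+1}]$, with the explicit coefficient formula $\Ed(P_i,x)=w_i[1-2F_i(x_{s+1},x_{s^c})]\cdot x + c_i + D_i(x_{s+1},x_{s^c}) - D^c_i(x_{s^c+1},x_{s+N})$ established earlier.

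Second, I would argue that the line set $L$, obtained by extending the linear pieces of all $\Ed(P_i,x)$, can be built in $O(mn)$ total time. The key observation is that $\Ed(P_i,x)$ turns at $x=x_s$ only when $P_i$ has a location at $u_s$, $u_{s^c}$, or $u_{s+N}$; hence $\sum_{s=1}^{N}|\calP_s| = |L|\le mn$. Sweeping the intervals from left to right while maintaining the three coefficients $F_i,D_i,D^c_i$ in the arrays $X,Y,Z$ lets each turn be processed in time proportional to the number of locations at those three vertices, so the whole of $L$ is assembled in $O(mn)$ time. Third, I would feed $L$ to the dynamic convex-hull structure of Brodal and Jacob~\cite{ref:BrodalDy02}, which dualizes the upper envelope of lines to a lower hull of points and supports updates and extreme-point queries in $O(\log|L|)$ amortized time using $O(|L|)$ space. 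Moving to interval $[x_s,x_{s+1}]$ requires only $2|\calP_s|$ updates (delete each old piece, insert its successor); two extreme-point queries then locate the lowest point of the upper envelope restricted to $[x_s,x_{s+1}]$, yielding the local center $x^*_{s,s+1}$. Taking the minimum over all intervals gives $x^*$.

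The main obstacle is that, unlike the path or tree case, each $\Ed(P_i,x)$ is non-convex as $x$ traverses the cycle, since every distance term peaks at the antipode of its location; consequently the global minimum of the upper envelope cannot be found by a single convex query and must be computed piecewise over all $N=O(mn)$ intervals. Rebuilding an envelope from scratch on each interval would cost $\Omega(Nn\log n)$, which is far too slow. The technical crux is therefore the amortized, incremental maintenance: bounding the total number of slope/intercept changes by $\sum_s|\calP_s|=|L|\le mn$ and confining each interval's convex-hull updates to the few functions in $\calP_s$, so that the $O(\log|L|)$ amortized cost of the data structure accumulates to only $O(mn\log mn)$. Combining the $O(|G|)$ reduction, the $O(mn)$ construction of $L$, and the $O(mn\log mn)$ envelope optimization gives the claimed $O(|G|+mn\log mn)$ bound.
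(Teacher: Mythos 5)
Your proposal is correct and follows essentially the same approach as the paper's proof: augmenting the cycle with semicircular points, deriving the linear coefficient formula for $\Ed(P_i,x)$ on each interval $[x_s,x_{s+1}]$, building the line set $L$ in $O(mn)$ time via an incremental coefficient sweep keyed to locations at $u_s$, $u_{s^c}$, $u_{s+N}$, and maintaining the upper envelope with the Brodal--Jacob dynamic convex-hull structure at $O(\log|L|)$ amortized cost per update, with at most two extreme-point queries per interval to handle local centers pinned to interval endpoints. Your identification of the amortized bound $\sum_s|\calP_s|=|L|\le mn$ as the technical crux matches the paper exactly, so there is nothing to correct.
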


\section{The Algorithm}\label{sec:algorithm}
In this section, we shall present our algorithm for computing the center $x^*$ of $\calP$ on cactus $G$. We first give the lemma for solving the base case where a node of $T$, i.e., a block of $G$, is known to contain center $x^*$. 

\begin{lemma}\label{lem:basecase}
If a node $u$ on $T$ is known to contain center $x^*$, then $x^*$ can be computed in $O(mn\log mn)$ time. 
\end{lemma}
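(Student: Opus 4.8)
The plan is to reduce the restricted problem on block $G_u$ to a self-contained one-center instance living entirely on $G_u$, of exactly the kind solved in Section~\ref{sec:algforcycle} (when $G_u$ is a cycle) or in~\cite{ref:WangCo17} (when $G_u$ is a graft). First I would determine in $O(1)$ time whether $u$ is a cycle node or a graft node. The central observation is that for any point $x$ on $G_u$, the shortest path from a location $p_{ij}$ lying outside $G_u$ to $x$ must enter $G_u$ through exactly one hinge $h$ of $G_u$, namely the hinge of the hanging subgraph containing $p_{ij}$, so that $d(p_{ij},x)=d(p_{ij},h)+d(h,x)$. Grouping the locations of each $P_i$ according to whether they lie in $G_u$ or in the hanging subgraph $S_h$ at some hinge $h$, I obtain for $x\in G_u$
\[
\Ed(P_i,x)=w_i\sum_{p_{ij}\in G_u}f_{ij}\,d(p_{ij},x)+\sum_{h}w_i\,\pi_{i,h}\,d(h,x)+c_i,
\]
where the sum ranges over the hinges $h$ of $G_u$, $\pi_{i,h}=\sum_{p_{ij}\in S_h}f_{ij}$ is $P_i$'s probability sum of $S_h$, $\sigma_{i,h}=\sum_{p_{ij}\in S_h}f_{ij}\,d(p_{ij},h)$, and $c_i=w_i\sum_{h}\sigma_{i,h}$ is a constant independent of $x$.

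This identity lets me replace all locations of $P_i$ outside $G_u$ by a single effective location of probability mass $\pi_{i,h}$ placed at each hinge $h$, together with the additive constant $c_i$. By the displayed equation, the expected distance plus $c_i$ of this reduced instance agrees with the true $\Ed(P_i,\cdot)$ everywhere on $G_u$; since $G_u$ is known to contain the center, solving the reduced instance yields $x^*$. Because each location of $P_i$ lies either in $G_u$ or in exactly one $S_h$, the masses of the reduced instance still sum to one, so the sub-algorithms apply. To build the instance I must compute $\pi_{i,h}$ and $\sigma_{i,h}$ for every $P_i$ and every hinge $h$. Each hanging subgraph of $G_u$ is represented by a split subtree of the $H$-subtree $T^H_u$, and these split subtrees are pairwise disjoint with total size $O(|T|)=O(mn)$. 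Rooting $T^H_u$ at $u$ and traversing each split subtree from its hinge exactly as in the proof of Lemma~\ref{lem:computeEdforall} (which, in time linear in the subtree size, computes the distance of every visited location to the hinge), I accumulate $\pi_{i,h}$ and $\sigma_{i,h}$ into arrays indexed by $i$ and then set $c_i=w_i\sum_h\sigma_{i,h}$. The entire computation runs in $O(mn)$ time.

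It remains to solve the reduced instance on $G_u$. If $G_u$ is a cycle, this is precisely the generalized (constant-augmented) one-center problem on a cycle treated in Section~\ref{sec:algforcycle}, so Theorem~\ref{the:cycle} computes the answer in $O(|G_u|+mn\log mn)=O(mn\log mn)$ time. If $G_u$ is a graft, then $G_u$ is a tree and each $\Ed(P_i,\cdot)+c_i$ is convex and piecewise linear along every path of $G_u$; adapting the linear-time tree algorithm of~\cite{ref:WangCo17} to these functions (the constants only shift the convex functions upward and hence preserve the structure it exploits) returns the center in $O(mn)$ time. In either case the total cost is $O(mn\log mn)$, which proves the lemma.

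The step I expect to be the main obstacle is computing the external contributions $\pi_{i,h}$ and $\sigma_{i,h}$ over all hinges simultaneously within the $O(mn)$ budget, rather than re-traversing the graph once per hinge; exploiting the disjointness of the split subtrees of $T^H_u$ together with the hinge-rooted distance bookkeeping of Lemma~\ref{lem:computeEdforall} is what keeps this linear. A secondary point needing care is verifying that the constant-augmented instance faithfully reproduces $\Ed(P_i,\cdot)$ on all of $G_u$ — in particular that locations sitting exactly at a hinge are correctly folded into $\pi_{i,h}$ with zero internal distance, since each hinge is an open vertex of $G_u$ and carries no locations of its own — and confirming that the graft sub-algorithm tolerates the additive constants.
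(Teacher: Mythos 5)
Your proposal follows essentially the same reduction idea as the paper, and for the cycle case it is the paper's proof verbatim in spirit: the paper likewise reassigns every exterior location $p_{ij}$ to the hinge of its hanging subgraph and accumulates $w_if_{ij}\,d(G_{u_k},v)$ into the constant $c_i$, then invokes Theorem~\ref{the:cycle}, whose cycle algorithm was deliberately stated for the constant-augmented distance. The one place you genuinely diverge is the graft case, and it is the soft spot you yourself flagged: you keep the additive constants $c_i$ and assert that the linear-time tree algorithm of~\cite{ref:WangCo17} can be ``adapted'' to tolerate them, which is plausible (convexity along paths is preserved) but unproven, and the lemma's $O(mn)$ bound for this case then rests on an unverified modification of a cited black box. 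The paper avoids this entirely: instead of collapsing exterior mass onto the hinge, it attaches, for each exterior vertex $v$ in the hanging subgraph at hinge $G_{u_k}$, a \emph{pendant} vertex $v'$ joined to $G_{u_k}$ by an edge of length $d(G_{u_k},v)$ and reassigns $v$'s locations to $v'$; since $d(v',x)=d(G_{u_k},v)+d(G_{u_k},x)$ for all $x\in G_u$, this reproduces $\Ed(P_i,\cdot)$ exactly on $G_u$ with no constants, so~\cite{ref:WangCo17} applies unmodified. If you want to keep your aggregated form, the same trick patches it: place the mass $\pi_{i,h}$ at a pendant vertex at distance $\sigma_{i,h}/\pi_{i,h}$ from $h$ (when $\pi_{i,h}>0$), since $\pi_{i,h}\,d(h,x)+\sigma_{i,h}=\pi_{i,h}\bigl(d(h,x)+\sigma_{i,h}/\pi_{i,h}\bigr)$. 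Two smaller points: the lemma allows $u$ to be a \emph{hinge} node, in which case $x^*$ is simply the hinge $G_u$ and is returned in $O(1)$ time --- your proof skips this trivial but stated case; and your $O(mn)$ computation of $\pi_{i,h}$, $\sigma_{i,h}$ via disjoint traversals of the split subtrees is sound and matches the paper's traversal bookkeeping.
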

\begin{proof}
If $u$ is a hinge node, then $x^*$ is at its corresponding hinge $G_u$ on $G$, which can be obtained in $O(1)$ time, and we then return $G_u$ immediately. 

Otherwise, block $G_u$ of node $u$ is a graft or a cycle. Let $u$ be the root of $T$; let $u_1, \cdots, u_s$ be all child nodes of $u$, and each of them is a hinge node; vertices $G_{u_1}, \cdots, G_{u_s}$ are (open) vertices on $G_u$. Denote by $T_1(u), \cdots, T_s(u)$ the split subtrees of $u$ on $T$; for each $1\leq k\leq s$, $T_k(u)$ is rooted at $u_k$, and let $G_k(u)$ be the subgraph on $G$ that $T_k(u)$ represents. Note that $T_1(u), \cdots, T_s(u)$ can be known in $O(mn)$ time.  

On the one hand, $G_u$ is a graft and we then reduce our problem to an instance of the one-center problem with respect to a set $\calP'$ of $n$ uncertain points on a tree $G'$ so that center $x^*$ can be computed in $O(mn)$ time by the algorithm~\cite{ref:WangCo17} for tree graphs. 

Initialize $G'$ as $G_u$ and set $\calP'=\calP$. 
To reduce our problem to a tree instance, we then do a pre-order traversal on $T_k(u)$ from $u_k$ to traverse $G_k(u)$. More specifically, for hinge node $u_k$, we reassign all locations at $u_k$ to vertex $G_{u_k}$ on $G'$. For every other node $u'$ on $T_k(u)$, as in Lemma~\ref{lem:computeEdforall}, we traverse in the pre-order $G_{u'}$ from the hinge represented by its parent node: for each vertex $v$, we first compute distance $d(G_{u_k},v)$ and next check if $v$ is an open vertex. If no, we join a new vertex $v'$ into $G'$, set the edge length between $v'$ and $G_{u_k}$ on $G'$ as $d(G_{u_k},v)$, and reassign all locations of $v$ to $v'$; otherwise, we continue our traversal. 

Clearly, traversing all $T_k(u)$'s in the above way takes $O(mn)$ time in total. Now, we obtain a tree $G'$ of size $O(mn)$ and a set $\calP'$ of $n$ uncertain points where each $P_i\in\calP'$ has at most $m$ locations on $G'$. It is not difficult to see that the center of $\calP'$ on $G'$ corresponds a point on $G_u$ that is exactly the center of $\calP$ on $G$. Consequently, center $x^*$ can be computed in $O(mn)$ time by the algorithm~\cite{ref:WangCo17}.

On the other hand, $G_u$ is a cycle and we then reduce our problem into a cycle case where a set $\calP'$ of $n$ uncertain points are on cycle $G'$. Initially, we set $G'$ as $G_u$, set $\calP'$ as $\calP$, and assign a variable $c_i = 0$ to each $P_i\in\calP'$. Similarly, we do a pre-order traversal on each $T_k(u)$ from $u_k$ to traverse $G_k(u)$. For $u_k$, we reassign $G_{u_k}$'s locations to the copy of $G_{u_k}$ on $G'$. For every other node, we compute the distance $d(G_{u_k}, v)$ for each vertex $v$ of the block; if $v$ is not an open vertex, then we reassign each location $p_{ij}$ at $v$ to $G_{u_k}$ on $G'$, and set $c_i = c_i + w_if_{ij}\cdot d(G_{u_k}, v)$. 

The above $O(mn)$-time traversal generates a cycle $G'$ and a set $\calP'$ of $n$ uncertain points each with at most $m$ locations on $G'$ and a constant $c_i$. We can see that computing $x^*$ of $\calP$ on $G$ is equivalent to computing the center of $\calP'$ on $G'$, which can be solved in $O(mn\log mn)$ time by Theorem~\ref{the:cycle}. 

Hence, the lemma holds.\qed
\end{proof}

Now we are ready to present our algorithm that performs a recursive search on $T$ to locate the node, i.e., the block on $G$, that contains center $x^*$. Once the node is found, Lemma~\ref{lem:basecase} is then applied to find center $x^*$ in $O(mn\log mn)$ time. 

On the tree, a node is called a {\em centroid} if every split subtree of this node has no more than half nodes, and the centroid can be found in $O(|T|)$ time by a traversal on the tree~\cite{ref:KarivAn79,ref:MegiddoLi83}.

We first compute the centroid $c$ of $T$ in $O(|T|)$ time. If $c$ is a hinge node, then we apply Corollary~\ref{cor:centerdetectforhinge} to $c$, which takes $O(mn)$ time. If $x^*$ is on $c$, we then immediately return its hinge $G_c$ on $G$ as $x^*$; otherwise, we obtain a split subtree of $c$ on $T$ representing the hanging subgraph of hinge $G_c$ on $G$ that contains $x^*$. 

On the other hand, $c$ is a block node. We then solve the {\em center-detecting} problem for $c$ that is to decide which split subtree of $c$'s H-subtree $T^H_c$ on $T$ contains $x^*$, that is, determine which hanging subgraph of block $G_c$ contains $x^*$. As we shall present in Section~\ref{sec:centerdetecting}, the center-detecting problem can be solved in $O(mn)$ time. It follows that $x^*$ is either on one of $T^H_c$'s split subtrees or $T^H_c$. In the later case, since $G_c$ is represented by $T^H_c$, we can apply Lemma~\ref{lem:basecase} to $c$ so that the center $x^*$ can be obtained in $O(mn\log mn)$ time. 

In general, we obtain a subtree $T'$ that contains center $x^*$. The size of $T'$ is no more than half of $T$. Further, we continue to perform the above procedure recursively on the obtained $T'$. Similarly, we compute the centroid $c$ of $T'$ in $O(|T'|)$ time; we then determine in $O(mn)$ time whether $x^*$ is on node $c$, and otherwise, find the subtree of $T'$ containing $x^*$ but of size at most $|T'|/2$.   

As analyzed above, each recursive step takes $O(mn)$ time. After $O(\log mn)$ recursive steps, we obtain one node on $T$ that is known to contain center $x^*$. At this moment, we apply Lemma~\ref{lem:basecase} to this node to compute $x^*$ in $O(mn\log mn)$ time. Therefore, the vertex-constrained one-center problem can be solved in $O(mn\log mn)$ time.  

Recall that in the general case, locations of $\calP$ could be anywhere on the given cactus graph rather than only at vertices. To solve the general one-center problem, we first reduce the given general instance to a vertex-constrained instance by Lemma~\ref{lem:reduction}, and then apply our above algorithm to compute the center. The proof for Lemma~\ref{lem:reduction} is presented in Section~\ref{sec:reduction}. 

\begin{lemma}\label{lem:reduction}
The general case of the one-center problem can be reduced to a vertex-constrained case in $O(|G| + mn)$ time. 
\end{lemma}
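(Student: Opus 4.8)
The plan is to transform the given general instance into a vertex-constrained one in two phases, each of which preserves the shortest-path metric on the part of $G$ that can contain the center, so that the center of the reduced instance maps back to $x^*$ on the original $G$. The first phase makes every location coincide with a vertex; the second phase compresses the graph so that its size becomes $O(mn)$ and, up to an open-vertex convention, every vertex carries a location.

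For the first phase I would traverse $G$ once, edge by edge. On each edge $e=(u,v)$ the locations of $\calP$ lying in its interior are given sorted by distance from $u$, so I can walk along $e$ and subdivide it at each such location, inserting a new vertex there and splitting $e$ into consecutive pieces whose lengths sum to the original length of $e$. This reuses each location exactly once and touches each edge once, so it runs in $O(|G|+mn)$ time. After this phase every $p_{ij}$ sits at a vertex and, since we only introduced subdivision points, all pairwise distances on $G$ are unchanged; hence each $\Ed(P_i,x)$ is literally the same function of $x$ as before, and the center is unaffected.

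For the second phase I would eliminate the vertices that carry no location. First, I iteratively remove empty degree-$1$ vertices together with their pendant edges, which prunes away every maximal pendant subtree containing no location. The center cannot lie strictly inside such a region: if $x$ moves from the attaching branch point into an all-empty pendant subgraph, then every $p_{ij}$ stays on the far side of that branch point, so each $d(p_{ij},x)$ only increases and therefore every $\Ed(P_i,x)$, and in particular $\max_i\Ed(P_i,x)$, only increases, exactly as in the inequality used in Lemma~\ref{lem:medianatx}; thus pruning these subtrees cannot delete the center. Next, I smooth out every remaining empty vertex of degree $2$ by merging its two incident edges into a single edge whose length is their sum; this leaves all distances among surviving points unchanged and keeps grafts as trees and cycles as cycles, so the cactus block structure is preserved. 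The only empty vertices that can remain then have degree at least $3$; a handshaking count shows their number is $O(mn)$, since the compressed cactus has $O(mn)$ location-carrying vertices and hence $O(mn)$ edges. For these I adopt the open-vertex convention already introduced for hinges (equivalently, attach a single placeholder location of probability $f_{ij}=0$, permitted since $f_{ij}\ge 0$ and contributing nothing to any $\Ed(P_i,x)$). The resulting graph has size $O(mn)$, every location is at a vertex, and the metric on its surviving part is identical to that of $G$, so its center is exactly $x^*$.

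Both phases amount to a constant number of traversals of $G$ and of the $O(mn)$ locations, giving the claimed $O(|G|+mn)$ bound; the iterative empty-leaf pruning in particular can be realised in linear time by a single bottom-up sweep maintaining a queue of currently-empty degree-$1$ vertices. The main obstacle I expect is not the bookkeeping but the correctness of the compression: I must argue carefully that pruning empty pendant subtrees and smoothing empty degree-$2$ vertices neither moves the center nor corrupts the cactus decomposition, in particular that no cycle degenerates and that the natural identification between points of the compressed graph and points of the original $G$ sends the computed center back to $x^*$. The pendant-pruning inequality is the crux, and it mirrors the monotonicity argument already established in Lemma~\ref{lem:medianatx}.
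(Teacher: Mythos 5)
Your phase~1 matches the paper's first step, and your empty-leaf pruning and empty degree-$2$ smoothing also appear verbatim in the paper's reduction. But there is a genuine gap in phase~2: you never handle cycles all of whose non-hinge vertices are empty, and this is exactly where your two key claims fail. If a cycle has one hinge (resp.\ two hinges) and no locations elsewhere, your smoothing collapses it to a self-loop (resp.\ a pair of parallel edges), so the claim that smoothing ``keeps cycles as cycles'' and preserves the cactus block structure is false; and if you instead refuse to smooth the last vertex on each arc to stay simple, empty degree-$2$ vertices survive. Worse, the handshake count you invoke is a tree argument and does not transfer to cacti: a cactus on $V$ vertices can have up to $3(V-1)/2$ edges, so ``$O(mn)$ location-carrying vertices hence $O(mn)$ edges'' is unjustified. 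Concretely, a chain of two-hinge empty cycles (after smoothing, a chain of $2$-gons) has $\Theta(|G|)$ empty hinges of degree $4$ with only $O(1)$ location-carrying vertices; the reduced graph then has size $\Theta(|G|)$ rather than $O(mn)$, which breaks the $O(|G|+mn\log mn)$ bound of Theorem~\ref{the:onecenter}, since the main algorithm would subsequently run on a graph of size $\Theta(|G|)$.

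The paper closes this hole with a dedicated post-order pass over the cycles before the pruning/smoothing you describe: a cycle with a single hinge and otherwise empty is deleted except for its hinge (moving into it strictly increases every expected distance), and a cycle with exactly two hinges and otherwise empty is replaced by a \emph{single edge} whose length is the shorter inter-hinge arc. The latter replacement deletes part of the metric (the longer arc), so it requires Observation~\ref{obs:cycletwohinges} --- that the center, if on such a cycle, must lie on the shorter path --- whose proof is a nontrivial comparison of the two arcs via semicircular points and nested upper envelopes; this is the essential content of the reduction that your proposal is missing, and your own flagged worry that ``no cycle degenerates'' is precisely the unresolved point. Only after this pass does every surviving cycle with at most two hinges carry a location on a non-hinge vertex, and only then does the size bound hold, proved in the paper (Observation~\ref{obs:boundofemptyvertex}) by contracting cycles to obtain a tree and counting there, not by handshaking on the cactus. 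Your zero-probability placeholder locations are fine in spirit (the paper assigns $3m$ of them per uncertain point so each $P_i$ keeps $O(m)$ locations, a detail you leave vague), but the missing cycle-elimination step is a real gap, not bookkeeping.
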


\begin{theorem}\label{the:onecenter}
The one-center problem of $n$ uncertain points on cactus graphs can be solved in $O(|G| + mn\log mn)$ time. 
\end{theorem}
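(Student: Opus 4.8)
The plan is to assemble the components developed above into one divide-and-conquer procedure on the skeleton $T$, preceded by the reduction. First I would apply Lemma~\ref{lem:reduction} to turn the arbitrary input into a vertex-constrained instance in $O(|G|+mn)$ time; this is the only place the $|G|$ term enters the bound, and every subsequent step is charged against the reduced instance, whose size is $O(mn)$. I would then build $T$ together with the auxiliary navigation links of Section~\ref{sec:pre} by a depth-first search in $O(mn)$ time, so that the three constant-time operations between $G$ and $T$ are in place.

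The core is a recursive centroid search for the unique block of $G$ (a cycle or a graft) that contains $x^*$. I maintain a subtree $T'$ of $T$ guaranteed to contain $x^*$, starting with $T'=T$. Each step computes the centroid $c$ of $T'$ in $O(|T'|)=O(mn)$ time and then runs a single detection call: Corollary~\ref{cor:centerdetectforhinge} if $c$ is a hinge node, and the $O(mn)$-time center-detecting procedure (Section~\ref{sec:centerdetecting}) if $c$ is a block node. In either case, in $O(mn)$ time the step either certifies that $x^*$ lies on $c$ itself (halting the search) or returns the single split subtree that must contain $x^*$. By the centroid property this subtree has at most half the nodes of $T'$, so replacing $T'$ by it shrinks the search region geometrically; hence after $O(\log mn)$ steps, each of cost $O(mn)$, I isolate one node of $T$ known to contain $x^*$, for a total search cost of $O(mn\log mn)$.

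Having isolated a node $u$, I would invoke Lemma~\ref{lem:basecase} exactly once to compute $x^*$ on $G_u$ in $O(mn\log mn)$ time --- the graft subcase through the tree algorithm of~\cite{ref:WangCo17}, the cycle subcase through Theorem~\ref{the:cycle}. Summing the $O(|G|+mn)$ reduction and preprocessing, the $O(mn\log mn)$ search, and the single $O(mn\log mn)$ base-case call yields the claimed $O(|G|+mn\log mn)$ bound.

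The correctness, and hence the main obstacle, is arguing that the two detection routines are consistent with a single global optimum, so that discarding the rest of $T'$ at each step never loses $x^*$. This rests on Lemma~\ref{lem:medianatx} and its corollaries --- that each $\Ed(P_i,x)$ decreases toward $P_i$'s median as one enters the hanging subgraph whose probability sum of $P_i$ exceeds $1/2$ --- combined with Corollary~\ref{cor:centerforart}'s characterization of when an articulation point is itself the center. Given these, the time accounting (each recursive step genuinely $O(mn)$, and the base case entered only once) is routine. The one subtlety worth checking carefully is that when detection reports $x^*$ on a block node $c$, applying Lemma~\ref{lem:basecase} to $G_c$ returns the true center rather than a boundary artifact, which holds because $T^H_c$ represents exactly the block $G_c$.
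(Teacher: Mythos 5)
Your proposal is correct and follows essentially the same route as the paper: the same pipeline of Lemma~\ref{lem:reduction}, skeleton construction with the constant-time navigation links, a centroid search on $T$ that invokes Corollary~\ref{cor:centerdetectforhinge} at hinge nodes and the $O(mn)$-time center-detecting procedure (Lemma~\ref{lem:solvecenterdetecting}) at block nodes, and a single final application of Lemma~\ref{lem:basecase} (tree algorithm for grafts, Theorem~\ref{the:cycle} for cycles), with identical time accounting. The one point you gloss --- that a split subtree of the H-subtree $T^H_c$ is contained in a split subtree of the centroid $c$ itself, so the halving bound indeed applies when $c$ is a block node --- is immediate and is likewise left implicit in the paper.
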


\subsection{The Center-Detecting Problem}\label{sec:centerdetecting}
Given any block node $u$ on $T$, the center-detecting problem is to determine which split subtree of $u$'s H-subtree $T^H_u$ on $T$ contains $x^*$, i.e., which hanging subgraph of block $G_u$ contains $x^*$. If $G$ is a tree, this problem can be solved in $O(mn)$ time~\cite{ref:WangCo17}. Our problem is on cacti and a new approach is proposed below.

Let $G_1(u), \cdots, G_s(u)$ be all hanging subgraphs of block $G_u$ on $G$. For each $G_k(u)$, let $v_k$ be the hinge on $G_k(u)$ that connects its vertices with $G/G_k(u)$. $G_1(u), \cdots, G_s(u)$ are represented by split subtrees $T_1(u), \cdots, T_s(u)$ of $T^H_u$ on $T$, respectively. 

Let $u$ be the root of $T$. For each $1\leq k\leq s$, $T_k(u)$ is rooted at a block node $u_k$, and hinge $v_k$ is an (open) vertex on block $G_{u_k}$. Additionally, the parent node of $u_k$ on $T$ is the hinge node $h_k$ on $T^H_u$ that represents $v_k$. Note that $h_k$ might be $h_t$ for some $1\leq t\neq k\leq s$. For all $1\leq k\leq s$, $T_k(u)$, $h_k$, and $v_k$ on block $G_{u_k}$ can be obtained in $O(mn)$ time via traversing subtrees rooted at $h_1, \cdots, h_s$. 

For each $1\leq k\leq s$, there is a subset $\calP_k$ of uncertain points so that each $P_i\in\calP_k$ has its probability sum of $G_k(u)/\{v_k\}$, i.e., $T_k(u)$, greater than $0.5$. Clearly, $\calP_i\cap\calP_j = \emptyset$ holds for any $1\leq i\neq j\leq s$. 

Define $\tau(G_k(u)) = \max_{P_i\in\calP_k}\Ed(P_i,v_k)$. Let $\gamma$ be the largest value of $\tau(G_k(u))$'s for all $1\leq k\leq s$. We have the following observation. 

\begin{observation}\label{obs:centernotatgi}
If $\tau(G_k(u))<\gamma$, then center $x^*$ cannot be on $G_k(u)/\{v_k\}$; if $\tau(G_r(u)) = \tau(G_t(u)) = \gamma$ for some $1\leq r\neq t\leq s$, then center $x^*$ is on block $G_u$. 
\end{observation}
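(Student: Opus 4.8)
The plan is to reduce both assertions to optimality comparisons against the hinges $v_1,\dots,v_s$, using two monotonicity facts about how $\Ed(P_i,\cdot)$ changes as $x$ crosses the cut vertex $v_k$. Write $\Lambda(x)=\max_{1\le i\le n}\Ed(P_i,x)$ for the objective at a point $x$, so that $x^*$ minimizes $\Lambda$. Recall that each $v_k$ is a hinge, hence an articulation point to which Lemma~\ref{lem:medianatx} and Corollary~\ref{cor:centerforart} apply, and that $G_k(u)\setminus\{v_k\}$ is exactly one split subgraph of $v_k$ (distinct $G_k(u)\setminus\{v_k\}$ are pairwise disjoint even when some $v_k$ coincide). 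The two facts I would establish first, both directly from the distance decomposition used in the proof of Lemma~\ref{lem:medianatx}, are: \textbf{(A)} if $P_i\in\calP_k$ and $x\notin G_k(u)\setminus\{v_k\}$, then $\Ed(P_i,x)\ge\Ed(P_i,v_k)$, strictly when $x\ne v_k$; and \textbf{(B)} if $P_i\notin\calP_k$ and $y\in G_k(u)$, then $\Ed(P_i,y)\ge\Ed(P_i,v_k)$. Each follows by splitting $\Ed(P_i,\cdot)$ into its contributions inside and outside $G_k(u)\setminus\{v_k\}$ and bounding the difference from the value at $v_k$ via the triangle inequality; the sign of the governing coefficient is exactly whether the probability sum on $G_k(u)\setminus\{v_k\}$ exceeds $0.5$, i.e.\ whether $P_i\in\calP_k$.

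The crux is the claim that \emph{if $x^*\in G_k(u)\setminus\{v_k\}$, then $\Lambda(v_k)=\tau(G_k(u))$, and hence $\Lambda(x^*)\le\tau(G_k(u))$.} Since $\tau(G_k(u))=\max_{P_i\in\calP_k}\Ed(P_i,v_k)\le\Lambda(v_k)$ always holds, I only need the reverse inequality, namely that some uncertain point attaining $\Lambda(v_k)$ lies in $\calP_k$. Because $x^*$ lies strictly inside the hanging subgraph $G_k(u)$ of the articulation point $v_k$, Corollary~\ref{cor:centerforart} excludes its cases~1 and~2 (which force $x^*=v_k$) and puts us in case~3: every dominant uncertain point $P_d$ of $v_k$ has its median in $G_k(u)$. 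I would then argue each such $P_d$ belongs to $\calP_k$: if its probability sum on $G_k(u)\setminus\{v_k\}$ were at most $0.5$, then by Lemma~\ref{lem:medianatx}(2)--(3) the median of $P_d$ would sit at $v_k$, which triggers case~1 of Corollary~\ref{cor:centerforart} and forces $x^*=v_k$, a contradiction. Thus $\Lambda(v_k)=\Ed(P_d,v_k)\le\tau(G_k(u))$ for a dominant $P_d\in\calP_k$, and optimality of $x^*$ gives $\Lambda(x^*)\le\Lambda(v_k)=\tau(G_k(u))$. I expect this step to be the main obstacle: the delicate point is the boundary case where $P_d$'s probability sum on $G_k(u)\setminus\{v_k\}$ equals $0.5$, where $\Ed(P_d,\cdot)$ has a flat minimizing segment running from $v_k$ into $G_k(u)$; this is exactly resolved by the paper's tie-breaking convention (Lemma~\ref{lem:medianatx}(3) and Corollary~\ref{cor:medianblock}(3)) that places the median of $P_d$ at the hinge $v_k$, and care must be taken to invoke it consistently.

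Granting the claim, the first assertion follows quickly. Suppose for contradiction $x^*\in G_k(u)\setminus\{v_k\}$ while $\tau(G_k(u))<\gamma$. Choose $\ell$ with $\tau(G_\ell(u))=\gamma$; then $\ell\ne k$, and pick $P_{i_0}\in\calP_\ell$ attaining $\Ed(P_{i_0},v_\ell)=\gamma$. Since $G_k(u)\setminus\{v_k\}$ and $G_\ell(u)\setminus\{v_\ell\}$ are disjoint, we have $x^*\notin G_\ell(u)\setminus\{v_\ell\}$ and $x^*\ne v_\ell$, so Fact~(A) yields $\Lambda(x^*)\ge\Ed(P_{i_0},x^*)>\gamma$. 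This contradicts $\Lambda(x^*)\le\tau(G_k(u))<\gamma$, so $x^*$ cannot lie on $G_k(u)\setminus\{v_k\}$.

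For the second assertion, the first assertion already confines $x^*$ to block $G_u$ or to some $G_m(u)\setminus\{v_m\}$ with $\tau(G_m(u))=\gamma$. Suppose the latter held. Since $\tau(G_r(u))=\tau(G_t(u))=\gamma$ with $r\ne t$, at least one of $r,t$ — call it $\ell$ — is different from $m$, and $\tau(G_\ell(u))=\gamma$. Taking a witness $P_{i_0}\in\calP_\ell$ with $\Ed(P_{i_0},v_\ell)=\gamma$, the disjointness of $G_m(u)\setminus\{v_m\}$ and $G_\ell(u)\setminus\{v_\ell\}$ and Fact~(A) again give $\Lambda(x^*)>\gamma$, while the crux claim applied to $m$ gives $\Lambda(x^*)\le\tau(G_m(u))=\gamma$ — a contradiction. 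Hence $x^*$ must lie on block $G_u$, completing the proof.
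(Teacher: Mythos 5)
Your proof is correct, and while it rests on the same two pillars as the paper's --- the monotonicity of $\Ed(P_i,\cdot)$ across an articulation point (Lemma~\ref{lem:medianatx}) and the trichotomy of Corollary~\ref{cor:centerforart} --- it is organized quite differently. The paper argues \emph{locally and directly} at each hinge: it shows the dominant uncertain point at $v_k$ cannot lie in $\calP_k$ (because some witness from a subgraph achieving $\gamma$ dominates everything in $\calP_k$ at $v_k$) and invokes Corollary~\ref{cor:centerforart} to exclude $G_k(u)\setminus\{v_k\}$; for the second assertion it then splits into the cases $v_r\neq v_t$ and $v_r=v_t$, the latter requiring a separate argument producing two dominant uncertain points at the common hinge. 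You instead argue \emph{globally by contradiction} through your crux identity: if $x^*$ were strictly inside $G_k(u)$, then every dominant point of $v_k$ must belong to $\calP_k$ (essentially the contrapositive of the paper's step), forcing $\Lambda(v_k)=\tau(G_k(u))$, which collides numerically with $\Lambda(x^*)>\gamma$ obtained from a witness in a maximal subgraph. This buys you a uniform treatment of both assertions --- in particular, your observation that the split subgraphs $G_k(u)\setminus\{v_k\}$ are pairwise disjoint even when hinges coincide dissolves the paper's $v_r=v_t$ case analysis entirely --- at the cost of needing the tie-breaking convention for medians at the $0.5$ boundary, which you correctly flag and resolve via Lemma~\ref{lem:medianatx}(3). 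Two small patches: when ruling out $P_d\notin\calP_k$ you only treat the branch where $P_d$'s median sits at $v_k$ (triggering case~1), but if $P_d$ has probability sum exceeding $0.5$ on a \emph{different} split subgraph of $v_k$, its median lies strictly inside that other hanging subgraph --- this contradicts the case-3 placement of all dominants' medians in $G_k(u)$ just as directly, and should be stated. Also, the strict inequality in your Fact~(A) requires $w_{i_0}>0$, which is automatic whenever $\gamma>0$ (always true in the first assertion since $\gamma>\tau(G_k(u))\geq 0$; the degenerate $\gamma=0$ case in the second assertion is glossed over equally by the paper's own proof). Your Fact~(B), incidentally, is never used and can be dropped.
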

\begin{proof}
Let $G_k(u)$ be such hanging subgraph of $G_u$ with $\tau(G_k(u))<\gamma$. For each $1\leq r\neq k\leq s$, by Lemma~\ref{lem:medianatx}, every uncertain point $P_i\in\calP_r$ has $\Ed(P_i,x)\geq\Ed(P_i,v_r)$ for any point $x\in G_k(u)$. Additionally, $\tau(G_k(u))<\gamma$. Hence, the dominant uncertain point at $v_k$ can not belong to $\calP_k$. By Corollary~\ref{cor:centerforart}, center $x^*$ cannot be on $G_k(u)/\{v_k\}$. 

Suppose there are two subgraphs $G_r(u)$ and $G_t(u)$ with $\tau(G_r(u)) =\tau(G_t(u)) = \gamma$. To prove that $x^*$ is on $G_u$, it is sufficient to show that $x^*$ is on neither $G_r(u)/v_r$ nor $G_t(u)/v_t$. There are the two cases to consider. 

If $v_r\neq v_t$, every $P_i\in\calP_r$ has $\Ed(P_i, v_r)<\Ed(P_i, v_t)$ in that $P_i$'s probability sum of $G_r(u)$ is greater than $0.5$. Hence, the dominant uncertain point at $v_t$ cannot be in $\calP_t$, and likewise, the dominant uncertain point at $v_r$ is not in $\calP_r$. It implies that if $v_r\neq v_t$ then $x^*$ is on neither $G_r(u)/v_r$ nor $G_t(u)/v_t$. 

Otherwise, $v_r$ is indeed $v_t$. If the dominant uncertain points of $v_t$ are in neither $\calP_r$ nor $\calP_t$, then $x^*$ cannot be on $G_r(u)/\{v_r\}\cup G_t(u)/\{v_t\}$. Otherwise, the objective value at $v_t$ is $\gamma$ due to $\tau(G_r(u)) =\tau(G_t(u)) = \gamma$. Hence, there are at least two dominant uncertain points at $v_t$: one in $\calP_r$ determining $\tau(G_r(u))$ and the other in $\calP_t$ determining $\tau(G_t(u))$. By Corollary~\ref{cor:centerforart}, we have that $x^*$ is at $v_t$, namely, $x^*$ is on neither $G_r(u)/v_r$ nor $G_t(u)/v_t$. 

The observation thus holds. \qed
\end{proof}

Below, we first describe the approach for solving the center-detecting problem and then present how to compute values $\tau(G_k(u))$ for all $1\leq k\leq s$. 

First, we compute $\gamma=\max_{k=1}^{s}\tau(G_k(u))$ in $O(s)$ time. We then determine in $O(s)$ time if there exists only one subgraph $G_r(u)$ with $\tau(G_r(u)) =\gamma$. If yes, then center $x^*$ is on either $G_r(u)$ or $G_u$. Their only common vertex is $v_r$, and $v_r$ and its corresponding hinge $h_r$ on $T$ are known in $O(1)$ time. For this case, we further apply Corollary~\ref{cor:centerdetectforhinge} to $h_r$ on $T$. If $x^*$ is at $v_r$ then we immediately return hinge $v_r$ on $G$ as the center; otherwise, we obtain the subtree on $T$ that represents the one containing $x^*$ among $G_r(u)$ and $G_u$, and return it.  

On the other hand, there exist at least two subgraphs, e.g., $G_r(u)$ and $G_t(u)$, so that $\tau(G_r(u)) =\tau(G_t(u))= \gamma$ for $1\leq r\neq t\leq s$. By Observation~\ref{obs:centernotatgi}, $x^*$ is on $G_u$ and thereby node $u$ on $T$ is returned. Due to $s\leq mn$, we can see that all the above operations can be carried out in $O(mn)$ time. 

To solve the center-detecting problem, it remains to compute $\tau(G_k(u))$ for all $1\leq k\leq s$. We first consider the problem of computing the distance $d(v_k, x)$ for any given point $x$ and any given $v_k$ on $G$. We have the following result.

\begin{lemma}\label{lem:computingdistanceofukandx}
With $O(mn)$-time preprocessing work, given any hinge $v_k$ and any point $x$ on $G$, the distance $d(v_k,x)$ can be known in constant time.
\end{lemma}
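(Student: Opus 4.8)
The plan is to build a constant-time distance oracle on the skeleton tree $T$ based on a lowest-common-ancestor (LCA) decomposition of shortest paths, combined with per-block distance structures. First I would root $T$ at an arbitrary node $r$ and, by invoking the single-source distance traversal inside the proof of Lemma~\ref{lem:computeEdforall}, compute and store $\mathrm{dep}(v) = d(r,v)$ for every vertex $v$ on $G$; this costs $O(mn)$. In addition I would preprocess $T$ for constant-time LCA and level-ancestor queries with standard linear-time preprocessing, and equip each block $G_w$ of $G$ with a local distance structure: for a graft block, an in-block LCA structure returning the within-graft distance between any two of its vertices in $O(1)$; for a cycle block, prefix sums of clockwise arc lengths returning the within-cycle distance $\min(\text{clockwise arc},\ \text{counterclockwise arc})$ between any two vertices in $O(1)$. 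Since the blocks are edge-disjoint and their total size is $O(|G|) = O(mn)$, all of these structures are built in $O(mn)$ time and space.

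Next I would reduce a query $d(v_k, x)$ for an arbitrary point $x$ to two vertex queries: writing $x=(a,b,t)$ on edge $(a,b)$ of length $\ell$, every path reaching $x$ must arrive through $a$ or through $b$, so $d(v_k, x) = \min\{\, d(v_k, a) + t,\ d(v_k, b) + (\ell - t)\,\}$. It therefore suffices to answer $d(v_k, v)$ for the hinge $v_k$ and any vertex $v$ in $O(1)$. Using operation~1 I obtain the nodes $u_{v_k}$ and $u_v$ of $T$ holding $v_k$ and $v$, and set $w = \mathrm{LCA}_T(u_{v_k}, u_v)$. If $w$ is a hinge node, its vertex $G_w$ is a cut vertex separating $r$ from both $v_k$ and $v$, whence $d(v_k, G_w) = \mathrm{dep}(v_k) - \mathrm{dep}(G_w)$ and likewise for $v$, so $d(v_k, v) = \mathrm{dep}(v_k) + \mathrm{dep}(v) - 2\,\mathrm{dep}(G_w)$. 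If $w$ is a block node, a level-ancestor query gives the child hinge nodes $h_1, h_2$ of $w$ on the paths toward $u_{v_k}$ and $u_v$ (taking $G_{h_1}=v_k$ or $G_{h_2}=v$ when that endpoint already lies on $G_w$), and then $d(v_k, v) = (\mathrm{dep}(v_k) - \mathrm{dep}(G_{h_1})) + d_{G_w}(G_{h_1}, G_{h_2}) + (\mathrm{dep}(v) - \mathrm{dep}(G_{h_2}))$, where $d_{G_w}$ is read off the local structure of $G_w$ in $O(1)$.

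The correctness rests on the cut-vertex structure of a cactus: any path leaving a hanging subgraph must use the hinge attaching it, so the shortest $v_k$-to-$v$ path splits at $G_{h_1}$ and $G_{h_2}$ into three independent segments whose middle piece stays inside $G_w$ and equals the within-block shortest distance. I would verify each case, in particular the block-node case with $G_w$ a cycle, where the oracle correctly takes the shorter arc between $G_{h_1}$ and $G_{h_2}$; this is exactly where a naive global-depth formula $\mathrm{dep}(v_k)+\mathrm{dep}(v)-2\,\mathrm{dep}(\cdot)$ would fail, since two root-to-vertex shortest paths may leave a common cycle on opposite sides, and it is the step I expect to require the most care. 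Each query performs $O(1)$ LCA, level-ancestor, array, and local-block lookups, so queries run in constant time after $O(mn)$ preprocessing, establishing the lemma.
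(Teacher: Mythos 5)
Your proposal is correct, but it takes a genuinely different route from the paper. The paper's preprocessing is \emph{local to the fixed block node} $u$ of the center-detecting problem: it traverses each split subtree $T_k(u)$ once to record the single-source distance $d(v_k,v)$ for every vertex $v$ of the hanging subgraph $G_k(u)$ (tagging each such vertex with $u_k$), and separately builds one intra-block distance structure on $G_u$ alone --- an LCA structure with root distances if $G_u$ is a graft, clockwise prefix distances if it is a cycle. A query then locates the edge $(v,v')$ containing $x$, determines by the tags which of three cases holds (edge inside some $G_r(u)$, edge straddling hinge $v_r$, or edge on $G_u$), and composes $d(v_k,x)=d(v_r,x)+d(v_r,v_k)$ through the cut vertex $v_r$, so only hinges of $G_u$ are supported as query sources --- which is all the lemma is ever used for. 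You instead build a \emph{global} distance oracle: depths $\mathrm{dep}(v)=d(r,v)$ from a fixed root, LCA plus level-ancestor structures on the whole skeleton $T$, and a local structure in \emph{every} block, answering $d(y,z)$ for arbitrary vertex pairs via the case split on whether $\mathrm{LCA}_T$ is a hinge or block node; your reduction of a point query to $\min\{d(v_k,a)+t,\ d(v_k,b)+\ell-t\}$ and your insistence on the within-block distance (shorter arc on cycles) in the block-LCA case are exactly the right correctness points, and the cut-vertex telescoping $d(v_k,G_{h_1})=\mathrm{dep}(v_k)-\mathrm{dep}(G_{h_1})$ is valid in a cactus. What each buys: your oracle is strictly more general and reusable across all recursive steps without re-preprocessing, at the cost of heavier standard machinery (level-ancestor queries, a local structure per block); the paper's version is lighter --- no level-ancestor structure, a single LCA structure on one block --- and its per-subgraph tagging is reused immediately afterwards to compute the values $\tau(G_k(u))$. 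Two small imprecisions you should repair: the root of $T$ is a node, not a point of $G$, so fix a concrete root vertex (e.g., the hinge vertex $G_r$ of a hinge node, or any vertex of the root block) before defining $\mathrm{dep}$; and since a hinge vertex lies on several blocks, fix a canonical node of $T$ for it (its hinge node) so that $u_{v_k}$ and $u_v$ are well defined in your LCA queries.
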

\begin{proof}
For each $G_k(u)$ with $1\leq k\leq s$, as in Lemma~\ref{lem:computeEdforall}, we do a pre-order traversal on $T_k(u)$ starting from its root $u_k$ to calculate the distance $d(v_k, v)$ from every vertex $v$ on $G_k(u)$ to $v_k$, which can be done in $O(mn)$ time. Meanwhile, we associate every vertex $v$ on $G_k(u)/\{v_k\}$ with node $u_k$ on $T$ to indicate that $v$ uniquely belongs to $G_k(u)$. All these can be done in $O(mn)$ in total.

We proceed with traversing block $G_u$ to compute its inter-vertex distances for all vertices on $G_u$. If $u$ is a graft node, we pick any vertex on $G_u$ as its root $r$ and then preform a pre-order traversal on $G_u$ to compute the distance of each vertex to $r$. Further, we construct the lowest common ancestor data structure~\cite{ref:BenderTh00,ref:HarelFa84} on $G_u$ so that with $O(|G_u|)$ preprocessing time and space, the lowest common ancestor of any two vertices on $G_u$ can be obtained in constant time. 

Now, given are any two points $y$ and $z$ on $G_u$, and let $v_y$ (resp., $v_z$) be the closest vertex to $r$ that is adjacent to $y$ (resp., $z$). We first determine $v_y$ and $v_z$ in $O(1)$ time so that distances $d(y,r)$ and $d(z,r)$ can be known in $O(1)$ time. We then compute the lowest common ancestor $v'$ of $v_y$ and $v_z$ by performing a constant-time query on the data structure. Due to $d(y, z) = d(y, r) + d(z,r)-2d(v',r)$, 
$d(y,z)$ can be derived in constant time. 

Otherwise, $u$ is a cycle node. In this situation, starting from any vertex $r$, we traverse $G_u$ clockwise to compute the clockwise distance of every vertex to $r$. For any points $y$ and $z$ on $G_u$, $d(y,z)$, equal to the minimum of their clockwise and counterclockwise distances, can be obtained in $O(1)$ time. 

We now consider the problem of computing $d(v_k,x)$ for any given $v_k$ and point $x$ on $G$. Let $(v, v')$ be the edge that contains $x$ on $G$. Note that edge $(v, v')$ is either on $G_r(u)$ for some $1\leq r\leq s$ or on $G_u$. So, there are only three cases to consider. 

On the one hand, $v$ and $v'$ are associated with the same node $u_r$ on $T$. Recall that hinge node $h_r$ is adjacent to $u_r$ and $u$ on $T$. It represents hinge $v_r$ on $G_u$, and $v_r$ is an open vertex on block $G_{u_r}$. So, edge $(v, v')$ is on $G_r(u)/v_r$. We first obtain hinge $v_r$ on $G$ by $u_r$ in $O(1)$ time. If $v_r$ is exactly $v_k$, then $d(v_k, x)$ can be obtained in $O(1)$ time since $d(v,v_k)$ and $d(v',v_k)$ have been calculated ahead. Otherwise, hinges $v_r$ and $v_k$ are on block $G_u$. Since $d(v_r, x)$ and $d(v_r, v_k)$ are obtained in $O(1)$ time, $d(v_k,x)$, equal to their sum, can be known in constant time.   

If only one of $v$ and $v'$, say $v$, is associated with a node $u_r$ on $T$, then edge $(v, v')$ is on $G_r(u)$ and $v'$ is exactly hinge $v_r$ on $G_u$. Either $v_r$ is not $v_k$ (but both are on $G_u$), or $v_r=v_k$. For either case, distance $d(v_k,x)$ can be known in constant time. 

Otherwise, edge $(v, v')$ is on $G_u$, i.e., neither of $v$ and $v'$ are associated with any node on $T$. Clearly, distance $d(v_k,x)$ can be known in constant time. 

Therefore, the lemma holds. \qed 
\end{proof}

We now consider the problem of computing $\tau(G_k(u))$ for each $1\leq k\leq s$, which is solved as follows.  

First, we determine the subset $\calP_k$ for each $1\leq k\leq s$: Create auxiliary arrays $A[1\cdots n]$ initialized as zero and $B[1\cdots n]$ initialized as null. We do a pre-order traversal on $T_k(u)$ from node $u_k$ to compute the probability sum of each $P_i$ on $G_k(u)/{v_k}$. During the traversal, for each location $p_{ij}$, we add $f_{ij}$ to $A[i]$ and continue to check if $A[i]>0.5$. If yes, we set $B[i]$ as $u_k$, and otherwise, we continue our traversal on $T_k(u)$. Once we are done, we traverse $T_k(u)$ again to reset $A[i]=0$ for every location $p_{ij}$ on $T_k(u)$. Clearly, $B[i]=u_k$ iff $P_i\in\calP_k$. After traversing $T_1(u), \cdots, T_s(u)$ as the above, given any $1\leq i\leq n$, we can know to which subset $P_i$ belongs by accessing $B[i]$. 

To compute $\tau(G_k(u))$ for each $1\leq k\leq s$, it suffices to compute $\Ed(P_i, v_k)$ for each $P_i\in\calP_k$. In details, we first create an array $L[1\cdots n]$ to maintain values $\Ed(P_i, v_k)$ of each $P_i\in\calP_k$ for all $1\leq k\leq s$. We then traverse $G$ directly to compute values $\Ed(P_i, v_k)$. During the traversal on $G$, for each location $p_{ij}$, if $B[i]$ is $u_k$, then $P_i$ is in $\calP_k$. We continue to compute in constant time the distance $d(p_{ij}, v_k)$ by Lemma~\ref{lem:computingdistanceofukandx}, and then add value $w_i\cdot f_{ij}\cdot d(p_{ij}, v_k)$ to $L[i]$. It follows that in $O(mn)$ time we can compute values $\Ed(P_i, v_k)$ of each $P_i\in\calP_k$ for all $1\leq k\leq s$. 

With the above efforts, $\tau(G_k(u))$ for all $1\leq k\leq s$ can be computed by scanning $L[1\cdots n]$: Initialize each $\tau(G_k(u))$ as zero. For each $L[i]$, supposing $B[i]$ is $u_k$, we set $\tau(G_k(u))$ as the larger of $\tau(G_k(u))$ and $L[i]$. Otherwise, either $L[i]=0$ or $B[i]$ is null, and hence we continue our scan. These can be carried out in $O(n)$ time. 

In a summary, with $O(mn)$-preprocessing work, values $\tau(G_k(u))$ for all $1\leq k\leq s$ can be computed in $O(mn)$ time. Once values $\tau(G_k(u))$ are known, as the above stated, the center-detecting problem for any given block node $u$ on $T$ can be solved in $O(mn)$ time. The following lemma is thus proved. 

\begin{lemma}\label{lem:solvecenterdetecting}
Given any block node $u$ on $T$, the center-detecting problem can be solved in $O(mn)$ time. 
\end{lemma}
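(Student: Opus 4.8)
The plan is to prove the lemma by assembling the ingredients established above: the whole task reduces to (i) computing the values $\tau(G_k(u))$ for all $1\le k\le s$ and (ii) a short case analysis on their maximum $\gamma$. Since $s\le mn$, any work that is linear in $s$ or in the total number of locations stays within the $O(mn)$ budget, so the only quantities needing real care are the $\tau(G_k(u))$'s.

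First I would compute, for each hanging subgraph $G_k(u)$, the subset $\calP_k$ of uncertain points whose probability sum on $G_k(u)\setminus\{v_k\}$ exceeds $0.5$; by Corollary~\ref{cor:medianblock} these are exactly the uncertain points whose median could lie strictly inside $G_k(u)$. This is done by a pre-order traversal of each split subtree $T_k(u)$ that accumulates probability into an array $A[1\cdots n]$ and flags $P_i$ in an array $B[1\cdots n]$ once $A[i]>0.5$, resetting $A$ afterward; because the subtrees are disjoint, the total cost is $O(mn)$ and each $P_i$ lands in at most one $\calP_k$. With the distance preprocessing of Lemma~\ref{lem:computingdistanceofukandx}, a single traversal of $G$ then computes $\Ed(P_i,v_k)$ for every $P_i\in\calP_k$ (each required distance obtained in $O(1)$ time), and a final linear scan sets $\tau(G_k(u))=\max_{P_i\in\calP_k}\Ed(P_i,v_k)$. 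All of this runs in $O(mn)$ time.

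Having the $\tau$ values, I would compute $\gamma=\max_{k}\tau(G_k(u))$ in $O(s)$ time and branch on whether the maximum is attained uniquely. If a single $G_r(u)$ achieves $\gamma$, then Observation~\ref{obs:centernotatgi} rules out every other hanging subgraph, so $x^*$ lies on $G_r(u)\cup G_u$, whose only shared vertex is the hinge $v_r$; applying Corollary~\ref{cor:centerdetectforhinge} to the hinge node $h_r$ then decides in $O(mn)$ time whether $x^*=v_r$ or which of the two sides contains it, and I return the corresponding node or subtree of $T$. If instead two distinct subgraphs both attain $\gamma$, Observation~\ref{obs:centernotatgi} forces $x^*$ onto $G_u$, so I return $u$. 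Each branch costs $O(mn)$, giving the claimed bound.

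The hard part — already absorbed by Observation~\ref{obs:centernotatgi} — is the correctness of the reduction from ``which hanging subgraph contains $x^*$'' to ``which $\tau(G_k(u))$ is largest.'' The delicate point is that a hanging subgraph carrying the largest $\tau$ need not actually contain $x^*$ when several subgraphs share the same hinge $v_k$; the observation dispatches this by arguing, via the median inequality of Lemma~\ref{lem:medianatx} together with the dominance criterion of Corollary~\ref{cor:centerforart}, that distinct maximizers pin the center to $v_k$ (equivalently, to $G_u$). The residual subtlety left for the decision procedure is the unique-maximizer case, where $\tau(G_r(u))=\gamma$ only certifies that $x^*$ is on $G_r(u)$ or $G_u$ but not on which side of $v_r$; this is precisely why the extra call to Corollary~\ref{cor:centerdetectforhinge} at $h_r$ is needed to complete the localization.
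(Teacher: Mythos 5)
Your proposal is correct and follows essentially the same route as the paper's proof: you determine the sets $\calP_k$ by traversing the split subtrees with the same accumulate-and-flag arrays, compute the $\tau(G_k(u))$ values via the distance preprocessing of Lemma~\ref{lem:computingdistanceofukandx}, and then perform the identical case analysis on $\gamma$ using Observation~\ref{obs:centernotatgi}, with Corollary~\ref{cor:centerdetectforhinge} applied at $h_r$ in the unique-maximizer case. Your closing remark correctly isolates the same subtleties the paper handles (shared hinges among maximizers and the two-sided ambiguity at $v_r$), so nothing is missing.
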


\section{Reducing the General Case to the Vertex-Constrained Case}\label{sec:reduction}
In this section, we present how to reduce the general case to a vertex-constrained case. In the following, we say that a vertex on $G$ is empty if there are no locations at the vertex. 

Let $C$ be a cycle on $G$ of only two hinges where all other vertices are empty. Denote by $\pi$ the shorter path on $C$ between two hinges. If the length of $\pi$ is $l(C)/2$, then let $\pi$ be any of their clockwise and counterclockwise paths on $C$. The following observation helps reduce the size of $G$. 

\begin{observation}\label{obs:cycletwohinges}
If center $x^*$ is on $C$, $x^*$ must be on $\pi$. 
\end{observation}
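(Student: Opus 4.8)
The plan is to exploit that the two hinges $a$ and $b$ of $C$ are the only vertices of $C$ through which any location can be reached, reducing $\Ed(P_i,\cdot)$ on $C$ to a weighted sum of the two cycle-distances $d(\cdot,a)$ and $d(\cdot,b)$, and then to show that any candidate center lying strictly inside the longer arc is \emph{simultaneously} dominated by a single point of $\pi$.

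First I would set up the distance decomposition. Write $p := d(a,b)$ for the length of $\pi$, so $p \le l(C)/2$. Since $a$ and $b$ are the only hinges on $C$ and every other vertex of $C$ is empty, each location $p_{ij}$ of every $P_i$ is either at $a$, at $b$, or inside a hanging subgraph attached to $C$ through exactly one of $a$ or $b$; the shortest path from any point $x\in C$ to such a location must pass through the corresponding hinge. Hence for every $x\in C$ we have $d(x,p_{ij})=d(x,a)+d(a,p_{ij})$ or $d(x,p_{ij})=d(x,b)+d(b,p_{ij})$, where $d(x,a),d(x,b)$ are on-cycle (shorter-arc) distances. Grouping the locations of $P_i$ by the hinge through which they are reached gives
\begin{equation*}
\Ed(P_i,x)=w_i\bigl[\alpha_i\,d(x,a)+\beta_i\,d(x,b)\bigr]+w_i\gamma_i,
\end{equation*}
where $\alpha_i$ (resp.\ $\beta_i$) is the probability sum of $P_i$ reached through $a$ (resp.\ $b$), so $\alpha_i+\beta_i=1$, and $\gamma_i$ is a constant independent of $x$.

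Next I would argue by contradiction. Suppose $x^*$ lies on $C$ but not on $\pi$, so $x^*$ is interior to the longer arc $\pi'$, giving $d(x^*,a)>0$ and $d(x^*,b)>0$. By the triangle inequality $d(x^*,a)+d(x^*,b)\ge d(a,b)=p$, and since $\pi$ is the unique shortest $a$–$b$ path when $p<l(C)/2$, this inequality is strict. I then take the point $z$ on $\pi$ with $d(z,a)=\theta$ and $d(z,b)=p-\theta$ for some $\theta$ in the open interval $\bigl(\max\{0,\,p-d(x^*,b)\},\ \min\{p,\,d(x^*,a)\}\bigr)$; this interval is nonempty precisely because $d(x^*,a)+d(x^*,b)>p$ and $x^*\notin\{a,b\}$. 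For such $\theta$ both $d(x^*,a)-\theta>0$ and $d(x^*,b)-(p-\theta)>0$, so for every $i$
\begin{equation*}
\Ed(P_i,x^*)-\Ed(P_i,z)=w_i\bigl[\alpha_i\bigl(d(x^*,a)-\theta\bigr)+\beta_i\bigl(d(x^*,b)-(p-\theta)\bigr)\bigr]\ge 0,
\end{equation*}
and the bracket is strictly positive because $\alpha_i+\beta_i=1$ forces at least one coefficient to be positive. The dominant uncertain point at $x^*$ has positive weight (otherwise the optimal value is $0$ and the claim is trivial), so its value strictly decreases while no other value rises above the old maximum; hence $\max_i\Ed(P_i,z)<\max_i\Ed(P_i,x^*)$, contradicting that $x^*$ is a center. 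Thus $x^*$ must lie on $\pi$.

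Finally I would dispose of the degenerate case $p=l(C)/2$, where $\pi'$ is \emph{also} a shortest $a$–$b$ path: the same construction then yields $z\in\pi$ with $\Ed(P_i,z)\le\Ed(P_i,x^*)$ for all $i$ (weak domination), so a center can still be taken on $\pi$, consistent with the stated freedom to let $\pi$ be either arc. I expect the main obstacle to be exactly this simultaneous-domination step: replacing a single point of the long arc by a single point of $\pi$ that beats it for \emph{all} $P_i$ at once. The key that makes it work is that the admissible range for $\theta$ is nonempty precisely by the triangle inequality $d(x^*,a)+d(x^*,b)\ge p$, and that the constraint $\alpha_i+\beta_i=1$ rules out both comparison terms vanishing simultaneously.
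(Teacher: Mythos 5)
Your proof is correct, but it takes a genuinely different route from the paper's. The paper augments $C$ with the semicircular points of the two hinges, obtaining the clockwise sequence $u_1,u_2,u_3,u_4$ (hinges $u_1,u_4$), and argues edge by edge: every $\Ed(P_i,x)$ increases linearly as $x$ enters the longer arc from either hinge, and on the middle edge $(u_2,u_3)$ each $\Ed(P_i,x)$ is a segment \emph{parallel} to its counterpart on $\pi=(u_1,u_4)$ but with strictly larger endpoint values, so the upper envelope over $(u_2,u_3)$ lies above the one over $\pi$ and the local center on $\pi$ wins --- a comparison of envelope minima, consistent with the projection-and-upper-envelope machinery of Section~\ref{sec:algforcycle}. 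You instead exploit the cactus structure directly: since every location is reached through a unique hinge, $\Ed(P_i,x)=w_i\bigl[\alpha_i d(x,a)+\beta_i d(x,b)\bigr]+c_i$ with $\alpha_i+\beta_i=1$, and for any $x^*$ interior to the longer arc the strict triangle inequality $d(x^*,a)+d(x^*,b)>p$ lets you pick a single $z\in\pi$ with $d(z,a)<d(x^*,a)$ and $d(z,b)<d(x^*,b)$, which \emph{simultaneously} dominates $x^*$ for every $P_i$ at once. This pointwise domination is stronger than the paper's envelope comparison, avoids the semicircular-point bookkeeping entirely, and your handling of zero-weight dominant points is careful. One small point to state explicitly: in the tie case $p=l(C)/2$ your open interval for $\theta$ is empty (there $d(x^*,a)+d(x^*,b)=p$ exactly), so ``the same construction'' must be read as taking the boundary value $\theta=d(x^*,a)$, i.e., the mirror point of $x^*$ on $\pi$, which yields $\Ed(P_i,z)=\Ed(P_i,x^*)$ for all $i$ --- precisely the matching argument the paper uses for that case. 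That is a one-line fix, not a gap.
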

\begin{proof}
Since only two hinges are on $C$ and all other vertices are empty, every empty non-hinge vertex on $C$ can be removed from $C$. On purpose of analysis, we assume that $C$ contains only two hinges. 

Suppose that $\pi$ is the counterclockwise path between two hinges longer than their clockwise path. Join the semicircular point of every hinge as a new vertex to $C$. Let $\{u_1, u_2, u_3, u_4\}$ be their clockwise enumerations starting from hinge $u_1$. We thus have the following properties: $u_4$ is the other hinge; $u_2$ must be $u_4$'s semicircular point; $u_3$ must be that of $u_1$. 

Removing $C$ except for $u_1, u_4$ generates two disjoint subgraphs $G_1$ and $G_4$ where $u_1$ is on $G_1$ and $u_4$ is on $G_4$ (and which are not hanging subgraphs of $C$). All $mn$ locations of $\calP$ are on $G_1\cup G_4$. Denote by $\calP_1$ the subset of all uncertain points in $\calP$ each with its probability sum of $G_1$ at least $0.5$, and by $\calP_2$ the subset of uncertain points each with its probability sum of $G_4$ at least $0.5$. Hence, $\calP_1\cup\calP_2 =\calP$. 

Let $x$ be any point on $C$. Consider function $\Ed(P_i,x)$ of each $P_i\in\calP$ with respect to $x$. It is easy to see that each $\Ed(P_i,x)$ linearly increases as $x$ moves clockwise from $u_1$ to $u_2$ along edge $(u_1, u_2)$, and so does it as $x$ moves counterclockwise from $u_4$ to $u_3$ along edge $(u_4, u_3)$. This means that the objective value at any point of $(u_1,u_2)/\{u_1\}$ (resp., $(u_4,u_3)/\{u_4\}$) is larger than that at $u_1$ (resp., $u_4$). Thus, center $x^*$ is on neither $(u_1, u_2)/\{u_1\}$ nor $(u_4, u_3)/\{u_4\}$. 

What's more, for each $P_r\in\calP_1$, function $\Ed(P_r,x)$ monotonically increases from $\Ed(P_r, u_2)$ to $\Ed(P_r, u_3)$ as $x$ moves clockwise from $u_2$ to $u_3$ along edge $(u_2, u_3)$. It monotonically increases as well from $\Ed(P_r, u_1)$ to $\Ed(P_r, u_4)$ as $x$ moves counterclockwise from $u_1$ to $u_4$ on edge $(u_1, u_4)$, i.e., the path $\pi$. Importantly, the increasing rate (slope) of $\Ed(P_r,x)$ for $x$ on edge $(u_2, u_3)$ is same as that of it for $x$ on edge $(u_1, u_4)$. 

Consider function $y=\Ed(P_i,x)$ of each $P_i\in\calP$ for $x$ on both edges $(u_1,u_4)$ and $(u_2,u_3)$ in the $x,y$-coordinate system. Let the two edges be on $x$-axis with both $u_1$ and $u_2$ at the origin. For each $P_r\in\calP_1$, $\Ed(P_r,x)$ defines a line segment for $x\in [u_2, u_3]$ (resp., $x\in [u_1, u_4]$). The line segment of $\Ed(P_r,x)$ for $x\in [u_2, u_3]$ is parallel to that of $\Ed(P_r,x)$ for $x\in [u_1, u_4]$. Likewise, for each $P_t\in\calP_2$, the line segment of $\Ed(P_t,x)$ for $x\in [u_2, u_3]$ is parallel to that of $\Ed(P_t,x)$ for $x\in [u_1, u_4]$. 

The local center of $\calP$ on edge $(u_2,u_3)$ (resp., $(u_1,u_4)$) is decided by the lowest point on the upper envelope of line segments by $n$ functions $y=\Ed(P_i,x)$ for $x\in [u_2, u_3]$ (resp., for $x\in [u_1,u_4]$) on $x$-axis. 
Extending each line segment to a line. Because $\Ed(P_i, u_1)<\Ed(P_i, u_2)$ and $\Ed(P_i, u_4)<\Ed(P_i, u_3)$ for each $P_i\in\calP$. The upper envelope of functions $y=\Ed(P_i,x)$ for $x\in [u_2, u_3]$ is enclosed by that of functions $y=\Ed(P_i,x)$ for $x\in [u_1, u_4]$. It implies that the local center of $\calP$ on edge $(u_1, u_4)$ is of a smaller objective value than that of $\calP$ on edge $(u_2, u_3)$. Thus, center $x^*$ is not on edge $(u_2, u_3)$ either. 

Based on the above analysis, we have that center $x^*$ is not likely to be on the longer path between hinges $u_1$ and $u_4$ except for themselves. Therefore, center $x^*$ is on the shorter path $\pi$ of $u_1$ and $u_4$ on $C$. 

It is possible that the clockwise and counterclockwise paths between two hinges on $C$ are of same length. In this situation, $u_2$ must be at $u_1$, and $u_3$ must be at $u_4$. Because no locations of $\calP$ are on $C/\{u_1, u_4\}$. Every point on the clockwise path from $u_1$ to $u_4$ can be matched to a point their counterclockwise path in terms of the objective value, and vice versa. Recall that $\pi$ is either one of the two paths. The above implies that center $x^*$ is likely to be on $\pi$, and the other path can be removed from $G$.   

Therefore, the observation holds. \qed
\end{proof}

Now we consider the reduction from the general case where locations of $\calP$ can be anywhere on cactus $G$ to a vertex-constrained case on a set $\calP'$ of $n$ uncertain points and cactus $G'$ where all locations of $\calP'$ are at vertices of $G'$ and every vertex on $G'$ holds at least one location. 

At first, we perform a traversal on $G$ to join a new vertex to $G$ for every location interior of an edge on $G$. Recall that all locations on any edge $e$ of $G$ are given sorted. Hence, these can be done in $O(|G|+mn)$ time. At this point, we obtain a cactus $G_1$ whose size is at most $(|G|+mn)$ and every location of $\calP$ is at a vertex of $G_1$. 

Further, we perform a post-order traversal on $G_1$ to process cycles. For every cycle $C$, we first determine whether $C$ has only one hinge and all other vertices on $C$ are empty. If yes, then we remove $C$ from $G_1$ except for that hinge since center $x^*$ is not likely to be on $C$ except for that hinge. Otherwise, we check whether $C$ meets the condition that $C$ has only two hinges but no locations are on its non-hinge vertices. If yes, by Observation~\ref{obs:cycletwohinges}, the longer path of the two hinges on $C$ can be removed. For this situation, we perform another traversal on $C$ to compute the shortest path length $a$ of two hinges, remove $C$ except for two hinges, and finally connect the two hinges directly via an edge of length equal to $a$. Clearly, the above operations can be carried out in $O(|G|+mn)$ time and a cactus graph $G_2$ is generated. 

We proceed with performing another post-order traversal on $G_2$ to further reduce the graph size. During the traversal, we delete every empty vertex of degree $1$; for each empty vertex of degree $2$, we remove it from $G_2$ by merging its two incident edges. As a consequence, a cactus graph $G_3$ is obtained after the $O(|G|+mn)$-time traversal. 

At this moment, every cycle with at most two hinges consists of non-hinge vertices, and each of them holds locations of $\calP$. Every vertex of degree at most $2$ holds locations of $\calP$. Hence, every empty vertex on $G_3$ is of degree at least $3$. By these above properties, we have the following observation. 

\begin{observation}\label{obs:boundofemptyvertex}
There are no more than $3mn$ empty vertices on $G_3$. 
\end{observation}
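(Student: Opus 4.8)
The plan is to count the empty vertices through the degree structure of $G_3$. First I would record the two easy ingredients. Since $\calP$ contributes exactly $mn$ locations and every non-empty vertex carries at least one of them, the number $V_N$ of non-empty vertices is at most $mn$; in particular the number of leaves is at most $mn$, since a leaf has degree at most $2$ and is therefore non-empty. Combined with the established fact that every empty vertex has degree at least $3$, it then suffices to bound the number $V_E$ of empty vertices.

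For the count I would use the handshake identity together with the cyclomatic number of the (connected) cactus. Writing $V$, $E$ for the vertex and edge counts and $c$ for the number of cycles, a cactus satisfies $E = V - 1 + c$, so $\sum_v \deg(v) = 2E = 2V - 2 + 2c$. Splitting the degree sum into empty vertices (degree $\geq 3$) and non-empty vertices (degree $\geq 1$) gives $3V_E + V_N \leq 2V - 2 + 2c = 2V_E + 2V_N - 2 + 2c$, hence
\begin{equation*}
V_E \leq V_N + 2c - 2.
\end{equation*}
Thus, once I establish $c \leq mn$, I am done: together with $V_N \leq mn$ this yields $V_E \leq 3mn - 2 \leq 3mn$.

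The heart of the argument, and the step I expect to be the main obstacle, is the bound $c \leq mn$, because after the reduction a cycle with at least three hinges may consist entirely of empty vertices and so cannot be charged to any location it contains. To handle this I would pass to the block-cut tree $B$ of $G_3$ and classify cycles by their number of hinges. A cycle with at least three hinges is a block of degree $\geq 3$ in $B$, and the standard tree fact that the number of degree-$\geq 3$ nodes is bounded by the number of leaves confines these branching blocks to at most the number of leaf blocks of $B$; since cut vertices never form leaves, the leaf blocks are exactly the one-hinge cycles and the pendant bridges. I would then charge injectively to distinct non-empty vertices: each one-hinge cycle owns at least two private degree-$2$ (hence non-empty, by the reduction) non-hinge vertices, each two-hinge cycle owns at least one such private vertex, and each pendant bridge owns a private degree-$1$ non-empty endpoint. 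Because a non-hinge vertex lies on a unique cycle and these vertex classes are pairwise disjoint, summing the charges shows that the total number of cycles is at most the number of non-empty vertices, i.e. $c \leq V_N \leq mn$.

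Everything outside the cycle count is routine: the handshake/cyclomatic computation is a one-line identity, and the bounds $V_N \leq mn$ and the leaf bound follow immediately from the reduction guarantees that every vertex of degree at most $2$, and every non-hinge vertex of a cycle with at most two hinges, holds a location. The reduction properties are genuinely needed only in the charging step for $c \leq mn$, where they force the private low-degree vertices owned by leaf blocks and low-hinge cycles to be non-empty; this is precisely what rules out a proliferation of all-empty cycles and pins the constant at $3$.
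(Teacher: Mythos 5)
Your proof is correct, but it takes a genuinely different route from the paper's. The paper splits the empty-vertex count into empty $G$-vertices plus hinges and bounds each separately: it contracts every cycle of $G_3$ into a single vertex (reassigning the cycle's locations to it) to obtain an auxiliary tree $T'$ on which every vertex of degree at most $2$ is non-empty; this gives at most $mn$ such vertices, hence at most $mn$ branching vertices (so at most $mn$ empty $G$-vertices) and $|T'|\leq 2mn$, from which it asserts that the number of hinges is at most the number of cycles plus $G$-vertices, i.e.\ at most $2mn$. You instead reduce the whole statement to a bound on the number of cycles via the handshake identity combined with the cyclomatic number of a connected cactus ($E = V - 1 + c$), obtaining $V_E \leq V_N + 2c - 2$, and then prove $c \leq V_N \leq mn$ by a leaf-versus-branching count in the block-cut tree plus an injective charging: one-hinge cycles own two private non-empty degree-$2$ vertices, two-hinge cycles one such vertex, pendant bridges a non-empty degree-$1$ endpoint. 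Both arguments pass through a tree abstraction of $G_3$ (the contracted tree $T'$ versus the block-cut tree) and both rest on the invariant that vertices of degree at most $2$ are non-empty, but yours is more explicit exactly where the paper is terse --- the paper's hinge bound is asserted in one sentence --- it isolates precisely where the ``no all-empty low-hinge cycle'' invariants are used, and it yields the marginally sharper bound $V_E \leq 3mn - 2$. One degenerate case deserves a sentence: if $G_3$ is a single block (a lone cycle with no hinges, which your one-/two-/three-hinge classification does not cover), every vertex has degree $2$ and is therefore non-empty, so $V_E = 0$ and the claim holds trivially.
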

\begin{proof}
Since every vertex of degree at most $2$ on $G_3$ is not empty, every empty vertex is either a $G$-vertex or a hinge. Denote by $X$ the number of empty vertices on $G_3$. $X$ is thus bounded by the number $X_G$ of empty $G$-vertices plus the number $X_H$ of hinges on $G_3$. 

For the purpose of analysis, we construct a tree $T'$ from $G_3$ as follows: For every cycle $C$ on $G_3$, we replace $C$ by a new vertex $v$, connect $v$ with $C$'s adjacent vertices (hinges) on $G_3$, and reassign locations of $\calP$ at $C$'s non-hinge vertices to $v$. Additionally, we remove empty hinges of degree $2$ by connecting its two adjacent vertices; note that the number of hinges we removed is no more than the number of cycles. Because every cycle on $G_3$ with at most two hinges must contain non-empty non-hinge vertex. On $T'$, every vertex of degree at most $2$ is not empty. Since there are at most $mn$ locations on $T’$, there are at most $mn$ vertices of degree at most $2$ on $T‘$. It means the number of vertices of degree at least $3$ is no more than $mn$. Thus, we have $X_G\leq mn$.


Moreover, the above analysis implies that the size of $T’$ is no more than $2mn$. Because the total number of hinges on $G_3$, i.e., $X_H$, is less than the total number of cycles and $G$-vertices. Thus, we have $X_H\leq 2mn$. 

Therefore, the observation holds. \qed
\end{proof}

Observation~\ref{obs:boundofemptyvertex} implies $|G_3|\leq 4mn$. Let $G'$ be $G_3$ and denote by $V'$ the set of empty vertices on $G'$. Initialize $\calP'$ as $\calP$. We below assign new locations for each $P_i\in\calP'$ to construct a vertex-constrained case on cactus $G'$ and $\calP'$. 

First, we compute $V'$ by traversing $G'$ in $O(mn)$ time. We then create new locations for every uncertain point of $\calP'$. Suppose we are about to process $P_i$ of $\calP'$. Pick any $3m$ (empty) vertices from $V'$; then create $3m$ additional locations each with the probability of zero for $P_i$; assign each of them to one of the $3m$ vertices; finally, remove these $3m$ vertices from $V'$. We perform the same operations for uncertain points of $\calP'$ until $V'$ is empty. Now, every vertex on $G'$ holds at least one location. Additionally, we obtain a set $\calP'$ of $n$ uncertain points where each uncertain point $P_i$ has at most $4m$ locations on $G'$, and its each location is at a vertex on $G'$. 

Clearly, with $O(|G| + mn)$-time construction, we obtain a vertex-constrained case for $\calP'$ on $G'$. It is not difficult to see that solving the general case on $G$ with respect to $\calP$ is equivalent to solving this vertex-constrained case on $G'$ with respect to $\calP'$, which can be solved by our algorithm in $O(mn\log mn)$ time.  

\section{Conclusion}
In this paper, we consider the (weighted) one-center problem of $n$ uncertain points on a cactus graph. It is more challenging than the deterministic case~\cite{ref:Ben-MosheEf07} and the uncertain tree version~\cite{ref:WangCo17} because of the nonconvexity and the $O(m)$ complexity of the expected distance function. We propose an $O(|G| + mn\log mn)$ algorithm for this problem, which matches the $O(|G| + n\log n)$ result for the deterministic case~\cite{ref:Ben-MosheEf07}. Our algorithm is a simple binary search on the skeleton $T$ of $G$ for the block of $G$ containing the center. To support the search, we, however, solve the center-detecting problem for any given tree subgraph or cycle on a cactus. Our solution generalizes the method proposed for this problem on a tree~\cite{ref:WangCo17} but still runs in linear time. Moreover, an $O(|G|+mn\log mn)$ approach for the one-center problem on a cycle is proposed. Our techniques are interesting in its own right and may find applications elsewhere.

\end{document}